\title{Multi-Source Multi-Sink Nash Flows Over Time\footnote{This research was carried out in the framework of \textsc{Matheon} supported by Einstein Foundation Berlin.}}
\titlerunning{Multi-Source Multi-Sink Nash Flows Over Time}
\author{Leon Sering}{Institute of Mathematics, Technische Universität Berlin\\Straße des 17. Juni 136, 10623 Berlin, Germany}{sering@math.tu-berlin.de}{}{}
\author{Martin Skutella}{Institute of Mathematics, Technische Universität Berlin\\Straße des 17. Juni 136, 10623 Berlin, Germany}{martin.skutella@tu-berlin.de}{}{}
\authorrunning{L. Sering and M. Skutella}
\subjclass{\ccsdesc[500]{Mathematics of computing~Network flows};
\ccsdesc[500]{Theory of computation~Network games}}
\keywords{Network congestion, Nash equilibrium, dynamic routing game, deterministic queuing model}
\newcommand*{\inlineequation}[2][]{%
  \begingroup
    \refstepcounter{equation}%
    \ifx\\#1\\%
    \else
      \label{#1}%
    \fi
    \relpenalty=10000 %
    \binoppenalty=10000 %
    \ensuremath{%
      #2%
    }%
    ~\@eqnnum
  \endgroup
}
\renewcommand\labelenumi{(\roman{enumi})}
\renewcommand\theenumi\labelenumi
\definecolor{myColor}{RGB}{0,0,0}
\newcommand{\R}{\mathbb{R}}
\newcommand{\N}{\mathbb{N}}
\newcommand*\diff{\mathop{}\!\mathrm{d}}
\newcommand{\abs}[1]{\left\lvert #1 \right\rvert}
\newcommand{\Sources}{S^+}
\newcommand{\Sinks}{S^-}
\newcommand{\source}{s}
\newcommand{\sink}{t}
\newcommand{\T}{T}
\newcommand{\Flow}{\mathbb{R}_{\geq 0}}
\newcommand{\exG}{\bar G}
\newcommand{\exE}{\bar E}
\newcommand{\exV}{\bar V}
\newcommand{\exf}{\bar f}
\newcommand{\Paths}{\mathcal{P}}
\renewcommand{\l}{\ell}
\newcommand{\n}{n}
\newcommand{\m}{m}
\begin{document}
  \maketitle
  
\begin{abstract}
Nash flows over time describe the behavior of selfish users eager to reach their
destination as early as possible while traveling along the arcs of a network with
capacities and transit times. Throughout the past decade, they have been thoroughly
studied in single-source single-sink networks for the deterministic queuing model,
which is of particular relevance and frequently used in the context of traffic and
transport networks. In this setting there exist Nash flows over time that can be
described by a sequence of static flows featuring special properties, so-called
`thin flows with resetting'. This insight can also be used algorithmically to
compute Nash flows over time.
We present an extension of these results to networks with multiple sources and sinks
which are much more relevant in practical applications. In particular, we come up with a
subtle generalization of thin flows with resetting, which yields a compact description as
well as an algorithmic approach for computing multi-terminal Nash flows over time.
\end{abstract}

\section{Introduction} 
With the emergence of novel navigation and vehicle technologies (including, e.g.,
self-driving/smart vehicles) along with the availability of massive amounts of data in
todays and future traffic and transportation networks, increasing attention is given to
the mathematical modeling and algorithmic solution of the interplay of individual agents
in such networks. We study the behavior of selfish users who wish to travel through a
traffic or transportation network. While there is already a vast amount of literature
and results on steady states of such systems (see, e.g.,
Roughgarden~\cite{Roughgarden05} and the references therein), much less is known about
the often more realistic but also much more complex situation of such systems evolving
and changing over time.

\vspace{2cm}
\subparagraph{Flows over time.}
Flows over time provide an excellent mathematical model for agents (flow particles)
traveling through a network over time, with capacities and transit times (delays) on the
arcs. Flows over time have been introduced in a seminal paper by Ford and
Fulkerson~\cite{FordFulkerson58} and can also be found in their classic
textbook~\cite{FordFulkerson62}. For a given single-source single-sink network with
capacities and transit times on the arcs and a given time horizon, they show how to
efficiently construct a maximum flow over time, that is, a way of sending as much flow
as possible from the source to the sink within the given time horizon. The underlying
algorithm is based on a static min-cost flow computation in the given network where arc
transit times are interpreted as costs. A decomposition of the static flow into flows
along source-sink-paths then provides an optimal strategy for sending flow over time
from the source to the sink by using each path as long as possible.

Surprisingly, and in contrast to the situation known for classic (i.e., static) network
flows, the problem of balancing given supplies and demands in a network with several
sources and/or sinks by sending flow within a given time horizon turns out to be
considerably more difficult and complicated. Following the work of Ford and Fulkerson,
it took almost four decades before Hoppe and Tardos~\cite{HoppeTardos00} came up with an
efficient algorithm for solving this transshipment over time problem; see also Hoppe's
PhD thesis~\cite{Hoppe95}. Their algorithm, however, while being theoretically
efficient, relies on parametric submodular function minimization, leading to unpleasant
and usually unrealistic running times for networks of practical sizes. Only recently,
Schlöter and Skutella~\cite{SchloeterSkutella2017} presented a slight improvement of
this result. Another somewhat surprising evidence for the increased difficulty of flow
over time problems compared to static flow problems is the fact that the computation of
(fractional) multicommodity flows over time constitutes an NP-hard
problem~\cite{HallHipplerSkut-TCS}. We refer to~\cite{Skutella_2009} for a recent survey
on and thorough introduction to flows over time.

\subparagraph{Nash equilibria for the deterministic queuing model.} 
The flow over time problems discussed in the previous paragraph are all based on the
assumption that flow particles are controlled by a central authority who decides the
route choices and schedules of the particles. In most realistic traffic situations,
however, the lack of coordination among flow particles necessitates an additional game
theoretic perspective. We assume that each flow particle is an individual agent that
seeks to arrive at a destination in the least possible time. Such models have mostly
been studied in the transportation literature; see, e.g., the book by Ran and
Boyce~\cite{RanBoyce96} for an overview.

In this paper we study Nash equilibria for flows over time in the deterministic queuing
model that is also at the core of many large-scale agent-based traffic simulations such as,
e.g., MATSim; see~\cite{Horni2016}. Here the actual transit time of a flow particle along
an arc is the sum of the arc's free-flow transit time plus the waiting time spent in a
queue that builds up whenever more flow tries to use an arc than the arc's capacity can
handle. In particular, the first-in-first-out (FIFO) principle holds. 
We refer to Section~\ref{sec:flow_dynamics} for a detailed definition.

For a single-source single-sink network, Koch and Skutella~\cite{Koch_2009} characterize
Nash flows over time featuring a special and very useful structure: Their derivatives
are piece-wise constant, therefore constituting a sequence of particular static
source-sink flows, so-called \emph{thin flows with resetting}. Exploiting this key concept of
thin flows with resetting, Cominetti, Correa, and Larré~\cite{Cominetti_2015} provide a
constructive proof for the existence and uniqueness of equilibria in this setting, using
a fixed-point formulation. Furthermore, for the more general case of multiple
origin-destination pairs, they provide a non-constructive existence proof. For the
single-source single-sink setting, Cominetti, Correa, and Olver~\cite{Cominetti2017}
show that, for networks with sufficient capacity, a dynamic equilibrium reaches a
steady state in finite time.

\subparagraph{Our contribution.} Our structural and algorithmic understanding of Nash flows
over time is limited to the very restrictive special case of single-source single-sink
networks. Moreover, in contrast to the classical case of static flows, single-commodity
flows over time in multi-source multi-sink networks with given supplies and demands cannot
easily be reduced by introducing a super-source and a super-sink; see, e.g., the work of
Hoppe and Tardos~\cite{HoppeTardos00} discussed above. Nevertheless, we show that such a
reduction is possible, albeit non-trivial, when considering a particularly meaningful model
of Nash flows over time in such networks. This leads to an interesting generalization of
the structural and algorithmic results known for the single-source single-sink case;
see~\cite{Koch_2009,Cominetti_2015,Cominetti2017}. In particular, we present an appropriate
generalization of `thin flows with resetting' and prove that a Nash flow over time can be
described and algorithmically obtained via a sequence of these static flows. As another
interesting aspect of this work, we show how to get rid of the identification of flow
particles with the time they enter the network which has been used in previous work on
the single-source single-sink case. In our more general model, all flow is
waiting in front of the sources of the network right from the beginning, a subtle point
that turns out to be crucial for being able to handle multiple source nodes.

\subparagraph{Outline.}

In Section~\ref{sec:settings} we informally describe several different settings for dynamic
routing games with multiple sources and sinks and identify a suitable model for our
purposes. Section~\ref{sec:flow_dynamics} introduces the necessary concepts and notations
for describing Nash flows over time. Then, Section~\ref{sec:dynamic_routing_game} explains
how to deal with multiple source nodes. Finally, in Section~\ref{sec:demands} multiple
sinks are considered as well. 

\section{Settings for Routing Games with Multiple Sources and Sinks.}
\label{sec:settings} 

There are several different settings for dynamic routing games when considering multiple
sources and multiple sinks. We discuss the most meaningful interpretations in
the following.
  
Nash flows over time are mainly motivated by dynamic traffic assignments which naturally
lead to the consideration of multiple commodities with independent
origin-destination-pairs~$(\source_i, \sink_i)$ and inflow rates $r_i \geq 0$, for~$i =
1, \dots, \n$. At each origin~$\source_i$, a flow enters the network with rate~$r_i$ and
every infinitesimal small particle of this flow has the goal to reach
destination~$\sink_i$ as early as possible while considering all other particles from
the past and the future. For every commodity, there are time dependent in- and outflow
rates for every arc that must satisfy flow conservation at every node. A dynamic
equilibrium then consists of a flow over time with $n$ commodities, where each particle
chooses a combination of fastest routes from $\source_i$ to~$\sink_i$ as strategy. Note
that queues build up on arcs whenever the inflow rate exceeds the arc's capacity. This
causes a delay of all subsequent particles, therefore influencing the traversing time
of all routes using this arc. Cominetti et~al.~\cite{Cominetti_2015} prove that these
dynamic equilibria exist by using variational inequalities for the path-based
formulation. Unfortunately, the known techniques for single commodity flows are not
sufficient for analyzing or algorithmically constructing such dynamic multi-commodity
Nash flows over time. The fact that each commodity has different earliest arrival times
at the nodes is the main difficulty as this causes cyclic interdependencies between the
commodities. Each particle entering the network has to take into account not only all
flow that previously entered the network, but also flow entering the network
subsequently; an illustrative example is given in the left part of
Figure~\ref{fig:introduction}.

\begin{figure}[h]
\centering
\begin{subfigure}{0.48\textwidth}
\centering
\includegraphics{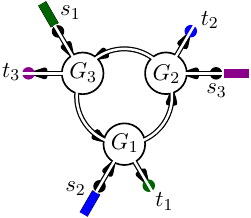} 
\end{subfigure}
\quad
\begin{subfigure}{0.48\textwidth}
\centering 
\includegraphics{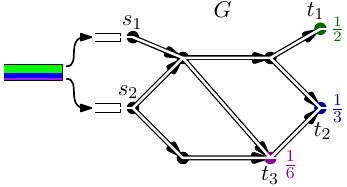} 
\end{subfigure}
\caption{\emph{Left:} Illustration of cyclic interdependencies of commodities with different
origin-destination-pairs. The waiting times for flow from $\source_1$ to $\sink_1$
within subnetwork~$G_1$ depend on flow starting later from $\source_2$ to~$\sink_2$. The
cyclic symmetry implies that a particle has to take into account not only previous but
also future flow from all sources. \emph{Right:} An example of the setting considered in this article. Each flow particle may
choose whether to enter the network at $\source_1$ or~$\source_2$, but the flow is
partitioned according to the demands at the sinks. Here one half of the flow has~$\sink_1$
as its destination, one third wants to reach~$\sink_2$, and the rest of one sixth aims
at~$\sink_3$.}
\label{fig:introduction}
\end{figure}%
When we relax the pairing of origins and destinations, however, the route choice of each
particle only depends on flow that previously entered the network. We stick to
individual inflow rates for the sources, but instead of matching the sources to
destinations, we consider $m$ sinks~$\sink_1, \dots, \sink_{\m}$ with demands~$d_1,
\dots, d_{\m} \geq 0$, such that $d_1 + \dots + d_{\m} = 1$. The value~$d_j$ denotes the
share of the total flow entering the network that has $\sink_j$ as destination. In
terms of traffic networks this means that each road user has a predetermined destination,
but may choose between multiple origins to enter the network; see right side of
Figure~\ref{fig:introduction}. In order to obtain well defined Nash flows
over time with unique arrival times we exclude situations as described in
Figure~\ref{fig:no_queues_at_sources} by considering queues in front of the sources. 
\begin{figure}[t]
\centering
\begin{subfigure}{0.3\textwidth}
\centering 
\includegraphics[page=1]{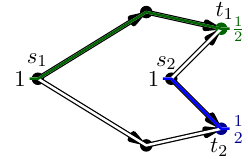}
\caption{} 
\label{fig:no_queues_at_sources_1}
\end{subfigure}
\quad
\begin{subfigure}{0.3\textwidth}
\centering
\includegraphics[page=2]{no_queues_at_sources}
\subcaption{}
\label{fig:no_queues_at_sources_2}
\end{subfigure}
\begin{subfigure}{0.3\textwidth}
\centering
\includegraphics[page=3]{no_queues_at_sources}
\subcaption{}
\label{fig:no_queues_at_sources_3}
\end{subfigure}
\caption{Assume that, at each point in time, the total inflow is equally divided
according to the demands. Then, in this symmetric instance, a possible dynamic
equilibrium sends all flow from $\source_1$ to $\sink_1$ and from $\source_2$ to
$\sink_2$~(a). Alternatively, the destinations might be
swapped~(b). Equilibrium (a), however, heavily benefits sink~$\sink_2$ by serving it
earlier as the path from $\source_1$ to~$\sink_1$ is longer than the path
from~$\source_2$ to~$\sink_2$. Symmetrically, equilibrium (b) benefits~$\sink_1$.
This is the reason for considering queues in front of the sources such that a Nash flow over
time is forced to behave symmetrically in this instance, that is, sinks with the
same demand are treated equally in every equilibrium; see~(c).}
\label{fig:no_queues_at_sources}
\end{figure}  %
In other words, there is essentially one flow~$\Flow$ consisting of a continuum
of infinitesimally small particles~$\phi\in\Flow$, where each splittable particle
chooses, in the order given by $<$, a convex combination of fastest routes from the sources to the sinks as strategy. The sum of the coefficients of all paths to
sink~$\sink_j$ has to be equal to demand~$d_j$. Each particle is then split according to these
coefficients and each part is sent along its route. How these choices of strategies
can be constructed, and what structure these Nash flows over time have, is discussed in
this paper.

In the case of one source and multiple sinks with given
demands, the two settings presented above are equivalent: given a
multi-origin-destination instance with one source but $n$ commodities, we can construct
an equivalent multi-source multi-sink instance by setting the inflow at the source to $r
\coloneqq r_1 + \dots + r_{\n}$ and the demand at sink~$\sink_j$ to $d_j \coloneqq r_j /
r$. It is easy to see that these settings are also equivalent in the case of
multiple sources and one sink.

\section{Flow Dynamics}
\label{sec:flow_dynamics}

In this section we present all necessary definitions of a \emph{fluid queuing network}. The model is a modified version
used by Koch and Skutella \cite{Koch_2009} and Cominetti et al.\ \cite{Cominetti_2011, Cominetti_2015,Cominetti2017}
that matches the multi-source multi-sink setting.
  
  Throughout this paper we consider a directed graph $G = (V, E)$ with transit times~$\tau_e \geq 0$ and capacities~$\nu_e > 0$
  on every arc~$e \in E$, a set of $\n \geq 1$ sources $\Sources = \Set{\source_1, \dots, \source_{\n}} \subseteq V$ with inflow rates~$r_1, \dots, r_{\n} > 0$, and a set of $\m \geq 1$ sinks~$\Sinks = \Set{\sink_1, \dots, \sink_\m} \subseteq V$. The corresponding
  demands will be introduced in Section \ref{sec:demands}. We assume that every node is reachable by a source and can itself
  reach at least one sink. Furthermore, we assume that the sum of transit times along every directed cycle is positive.
  
  \subparagraph*{Flows over time} 
  A flow over time is specified by locally integrable and bounded functions~$f^+_e: [0,
  \infty) \to [0, \infty)$ for every arc~$e$. These \emph{inflow functions} describe the rate of flow entering
  the arcs for every point in time~$\theta \in [0, \infty)$. We set $f^+_e(\theta) \coloneqq 0$ for $\theta < 0$.
  
  \begin{figure}[t]
    \centering
    \includegraphics{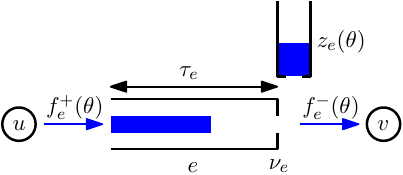} 
    \caption{A snapshot of arc $e = uv$ at time $\theta$, with transit time $\tau_e$, capacity $\nu_e$, inflowrate $f_e^+(\theta)$, outflowrate $f_e^-(\theta)$, and queue $z_e(\theta)$.}
    \label{fig:edge_dynamics}
  \end{figure} 
  
  For every arc~$e$ there is a bottleneck given by its capacity~$\nu_e$ at the head of the
  arc.\footnote{The dynamics are exactly the same if the bottleneck is located at the tail of the arc or anywhere
  between tail and head.} When flow enters $e$ it immediately starts to traverse this arc, which takes
  $\tau_e$ time. If the rate of flow trying to leave $e$ exceeds the capacity~$\nu_e$, the flow builds up a queue in
  front of the bottleneck which is described by a function~$z_e\colon [0,\infty) \to [0, \infty)$. Note that the
  queue does not have any physical dimension in the network, and is therefore called \emph{point queue}. Whenever there is a positive queue the outflow rate  operates at capacity rate $\nu_e$. This leads to the following evolution of the queue starting with $z(0) = 0$,
  \begin{equation}\label{eqn:z'_definition}
  z'_e(\theta) \coloneqq \begin{cases}
  f_e^+(\theta-\tau_e) - \nu_e & \text{ if } z_e(\theta) > 0 \\
  \max \Set{f_e^+(\theta-\tau_e) - \nu_e, 0} & \text{ if } z_e(\theta) = 0.
  \end{cases}
  \end{equation}
  This determines a unique queue function $z_e$ \cite{Cominetti_2015}, which is characterized later on.
  The outflow rate function~$f^-_e\colon [0, \infty) \to [0,
  \infty)$ is defined by
  \begin{equation}\label{eqn:defi_outflow}
  f_e^-(\theta) \coloneqq \begin{cases}
  \nu_e & \text{ if } z_e(\theta) > 0, \\
  \min\Set{f_e^+(\theta - \tau_e), \nu_e} & \text{ if } z_e(\theta) = 0.
  \end{cases}
  \end{equation}
  
    A \emph{flow over time} is given by a family of inflow functions $(f^+_e)_{e\in E}$ that \emph{conserve flow} at every~$v \in V\backslash \Sinks$, which means that the following equation holds for almost all $\theta \in [0, \infty)$:
  \begin{equation} \label{eqn:flow_conservation}
  \sum_{e\in \delta^+(v)} f_e^+(\theta) - \sum_{e \in \delta^-(v)} f_e^-(\theta) = \begin{cases}
  0 & \text{ if } v \in V\setminus \Sources, \\
  r_i & \text{ if } v = \source_i \in \Sources.
  \end{cases}\end{equation}
  This ensures that the network does not leak at intermediate vertices and that the amount of flow entering through
  source~$\source_i$ matches the inflow rate~$r_i$.
    
  The \emph{cumulative in- and outflow} of an arc $e$ is the total amount of flow that has entered or left $e$ up to some point in time $\theta$ and is defined by
  $F^+_e(\theta) \coloneqq \int_{0}^{\theta} f_e^+(\xi) \diff \xi$ and $F^-_e(\theta) \coloneqq \int_{0}^{\theta} f_e^-(\xi) \diff \xi$.
  The amount of flow in the queue of an arc $e$ at time $\theta$ equals the difference between the amount of flow that has entered the queue before time $\theta$ and the flow that has left the queue up to this point in time. The former can be described by the amount of flow that has entered arc $e$ at time $\theta - \tau_e$. In short,
  $z_e(\theta) = F_e^+(\theta - \tau_e) - F_e^-(\theta)$.
  \textcolor{myColor}{[See Lemma~\ref{lemma:in_out_flow} in \ref{ap:lemma_in_out_flow}.]} Since $f^+_e$ and $f^-_e$ are bounded, the
  functions $F^+_e$, $F^-_e$, and $z_e$ are Lipschitz continuous, and therefore almost everywhere differentiable due to Rademacher's theorem~\cite{Rademacher1919}. Considering that $f_e^+(\theta)$ and $f_e^-(\theta)$ are non-negative and $z'_e(\theta) \geq - \nu_e$ for all $\theta$ it follows that $F^+_e$ and $F^-_e$ are non-decreasing and $z_e$ cannot decrease faster than with slope~$-\nu_e$.
  
  We identify the flow with the non-negative reals~$\Flow$, that is, each~$\phi\in\Flow$ corresponds to an infinitesimally small
  flow particle. The natural ordering~$\leq$ corresponds to the priority among the
  flow particles when entering the network, i.e., particle $\phi$ has priority over all $\phi' > \phi$. Consequently, all flow that wants to enter the network through the same source does this in order of priority. 
  Note that the flow represented by the non-negative reals has a width of~$1$. That is, there is exactly one unit of
  flow associated with every unit interval~$[a, a + 1] \subseteq \Flow$.
  To distinguish between flow and time we write $\Flow$ for the ordered set of flow particles,
  mostly denoted by $\phi$ or $\varphi$, and $[0, \infty)$ for the
  time whose elements are \emph{points in time}, often denoted by~$\theta$ or $\vartheta$.
  
  A family of locally integrable functions $f_i\colon \Flow \to [0, 1]$, for $i =
  1, \dots, \n$, is called \emph{inflow distribution} if $\sum_{i = 1}^\n f_i(\phi) = 1$ for almost all $\phi \in
  \Flow$ and if each \emph{cumulative source inflow} 
  $F_i(\phi) \coloneqq \int_{0}^{\phi} f_i(\varphi) \diff \varphi$
  is unbounded for~$\phi \to \infty$.   
  The function $f_i(\phi)$ describes the fraction of particle $\phi$ that enters the network trough $s_i$. The
  cumulative source inflow functions have to be unbounded in order to guarantee that the inflow rates at the sources
  never run dry.

  \subparagraph*{Current shortest paths network}   
  Given a flow over time $(f^+_e)_{e\in E}$ together with an inflow distribution $(f_i)_{i
  = 1}^{\n}$, the \emph{arc travel time} for arc $e$ is the function $\T_e\colon  [0, \infty) \to [0, \infty)$  that
  maps the entrance time $\theta$ to the exit time~$\T_e(\theta)$. More precisely, if a particle enters $e$ at time
  $\theta$, it traverses the arc first, which takes $\tau_e$ time, and then queues up and has to wait in line for
  $z_e(\theta+\tau_e) / \nu_e$ time units. Hence,
  $\T_e(\theta) \coloneqq \theta + \tau_e + z_e(\theta+\tau_e)/\nu_e$.
  We
  require the flow to satisfy the \emph{first in first out (FIFO) condition} on every arc, that is, no particle can overtake other flow on an arc or
  in a queue. Suppose flow particle $\phi$ enters $e$ at time~$\theta$, then the amount of flow which has entered $e$
  before $\phi$ is exactly the amount of flow that leaves $e$ before time $\T_e(\theta)$ when $\phi$
  leaves the arc. In short $F_e^+(\theta) = F_e^-(\T_e(\theta))$.
  \textcolor{myColor}{[Lemma~\ref{lemma:in_out_flow} in \ref{ap:lemma_in_out_flow}.]}

  For every $i = 1, \dots, \n$, the \emph{source arrival time function} maps each particle $\phi \in \Flow$ to the time it arrives at $\source_i$ and is given by $\T_i(\phi) \coloneqq F_i(\phi) / r_i$.  
  Given an $\source_i$-$v$ path $P = (e_1, e_2, \dots, e_k)$ the \emph{arrival time function}~$\T_P\colon \Flow\to[0,\infty)$ maps the particle~$\phi$ to the time at which $\phi$ arrives at $v$ if it traverses the path $P$, i.e.,
  $\T_P(\phi) \coloneqq \T_{e_k} \circ \T_{e_{k-1}} \circ \dots \circ \T_{e_1} \circ \T_i (\phi)$.  
  Since the functions $z_e$ and $F_i$ are Lipschitz continuous, the same holds for $\T_e$,$\T_i$, and $\T_P$. Note that the queue length $z_e$ cannot decrease faster than with slope $-\nu_e$ and, therefore all these
  $\T$-functions are nondecreasing. Furthermore, all $\T$-functions go to infinity for $\phi \to
  \infty$ since the queue lengths $z_e$ are non-negative and the $F_i$ are unbounded.
  
  The \emph{earliest arrival time function} $\l_v\colon  \Flow \to [0, \infty)$ of
  node $v \in V$ maps each particle~$\phi$ to the earliest time~$\l_v(\phi)$ it can possibly
  reach node~$v$. We have
    $\l_v(\phi) = \min_{P \in \Paths_v} \T_P(\phi)$, where $\Paths_v$ is the set of all paths from some source $\source \in \Sources$ to $v$.
  Note that these node-labels are also Lipschitz continuous, nondecreasing, and unbounded and that they are the unique solutions to the following Bellman equations:
    \begin{align}  \label{eqn:bellman}
    \begin{aligned}
    \l_{\source_i}(\phi) &= \min\left(\{\:\T_i(\phi)\:\} \cup \Set{\T_e(\l_u(\phi)) | e = u \source_i\in E}\right)
    &&\quad \text{ for } i = 1, \dots, \n, \\ \l_v(\phi) &= \min_{e = u v\in E} \T_e(\l_u(\phi))  &&\quad \text{ for } v
    \in V\backslash \Sources.
    \end{aligned}
    \end{align}
    This is well defined since all cycles in $G$ have by assumption positive travel times.
  
  For a fixed particle~$\phi$ we call an arc $e = uv$ \emph{active for~$\phi$} if $\l_v(\phi) = \T_e(\l_u(\phi))$ holds. With $E'_\phi$ we denote the set of all active arcs for
  particle~$\phi$ and the subgraph $G'_\phi = (V, E'_\phi)$ is called the \emph{current shortest paths network}. 
  Note that the current shortest paths network is always acyclic since the sum of transit times of each directed cycle is positive.
  
\section{Multi-Source Single-Sink Nash Flows over Time}
\label{sec:dynamic_routing_game} 
  For this section we only consider fluid queuing networks with exactly one sink~$\sink$.
  A flow over time together with an inflow distribution corresponds to a strategy profile, where the strategy of each particle consists of a convex combination of $\Sources$-$\sink$-paths. The following definition characterizes a Nash equilibrium.
  
  \begin{definition}[Nash flow over time] \label{defi:Nash_flow} 
   A tuple $f = ((f^+_e)_{e\in E},(f_i)_{i = 1}^\n)$ consisting of a flow over time and an inflow distribution  is a
   \emph{Nash flow over time}, also called \emph{dynamic equilibrium}, if the following two \emph{Nash flow conditions} hold:
   \begin{align}
   \l_{\source_i}(\phi) &= \T_i(\phi)  &&\text{ for all } i = 1, \dots,\n \text{ and almost all } \phi \in \Flow, \label{eqn:Nash_condition_source} \tag{N1}\\
   f^+_e(\theta) &> 0 \;\;\Rightarrow\;\; \theta \in \l_u(\Phi_e) &&\text{ for all arcs } e = uv \in E \text{ and almost all } \theta \in [0, \infty),\tag{N2}\label{eqn:Nash_condition_active} 
   \end{align}
     where $\Phi_e \coloneqq \set{\phi \in \Flow | e \in E'_\phi}$ is the set of flow particles for which arc $e$ is
     active.
\end{definition}
  
  Figuratively speaking, these two conditions mean, that entering the network through a source $s_i$ is always a fastest
  way to reach $s_i$ (\ref{eqn:Nash_condition_source}) and that a Nash flow over time uses only active arcs (\ref{eqn:Nash_condition_active}), and therefore only shortest paths to $t$. More precisely, particle
  $\phi$ reaches $\sink$ at time $\l_{\sink}(\phi)$ by using active arcs only, and $\l_{\sink}(\phi)$ is the earliest
  time $\phi$ can possibly reach $\sink$ under the assumption that the routes of all previous
  particles~$\varphi<\phi$ are fixed. Since this is true for all particles, a Nash flow over time is indeed a Nash equilibrium.
  
  \begin{lemma} \label{lemma:nash_flow_characterization}
  A tuple~$f = ((f^+_e)_{e\in E}, (f_i)_{i = 1}^\n)$ of a flow over time and an inflow distribution is a Nash flow over
  time if, and only if, we have $F_e^+(\l_u(\phi)) = F_e^-(\l_v(\phi))$ and $F_i(\phi) = \l_{\source_i}(\phi) \cdot r_i$
  for all arcs $e = uv \in E$, every $i = 1, \dots, \n$, and all particles $\phi \in \Flow$.
  \end{lemma}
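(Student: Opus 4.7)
The second identity, $F_i(\phi) = \l_{\source_i}(\phi)\, r_i$, is equivalent to condition~\eqref{eqn:Nash_condition_source} by the very definition $\T_i(\phi) = F_i(\phi)/r_i$, so the entire argument reduces to the arc identity $F_e^+(\l_u(\phi)) = F_e^-(\l_v(\phi))$. The principal tool will be the FIFO relation $F_e^+(\theta) = F_e^-(\T_e(\theta))$ stated before the lemma and formalized in Lemma~\ref{lemma:in_out_flow}.

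For the forward direction I plan to split on whether $e = uv$ is active for $\phi$. If $e \in E'_\phi$, then $\l_v(\phi) = \T_e(\l_u(\phi))$ and the claim is a direct substitution into FIFO. The harder inactive case $\l_v(\phi) < \T_e(\l_u(\phi))$ reduces, via FIFO, to showing that $F_e^-$ is constant on $[\l_v(\phi), \T_e(\l_u(\phi))]$, equivalently that no flow enters $e$ at any time $\theta$ whose exit time $\T_e(\theta)$ lies in this interval. By~\eqref{eqn:Nash_condition_active}, any such $\theta$ must equal $\l_u(\psi)$ for some $\psi \in \Phi_e$, so $\l_v(\psi) = \T_e(\theta) > \l_v(\phi)$; monotonicity of $\l_v$ forces $\psi > \phi$, while monotonicity of $\l_u$ together with $\l_u(\psi) = \theta \leq \l_u(\phi)$ forces $\l_u(\psi) = \l_u(\phi)$, collapsing $\theta$ to the single point $\l_u(\phi)$, which carries no mass since $F_e^+$ is Lipschitz continuous.

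For the reverse direction I assume both identities. Condition~\eqref{eqn:Nash_condition_source} is immediate from $\T_i(\phi) = F_i(\phi)/r_i$. For~\eqref{eqn:Nash_condition_active}, given $\theta$ with $f_e^+(\theta) > 0$, I pick a particle $\phi$ with $\l_u(\phi) = \theta$, which exists because $\l_u$ is continuous, nondecreasing and unbounded, while $f_e^+(\theta) = 0$ whenever $\theta < \l_u(0)$ (no flow can have reached $u$ yet). Combining the assumed equality $F_e^+(\theta) = F_e^-(\l_v(\phi))$ with FIFO yields $F_e^-(\l_v(\phi)) = F_e^-(\T_e(\theta))$, so $F_e^-$ is constant on $[\l_v(\phi), \T_e(\theta)]$. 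But positive inflow on a neighborhood of $\theta$ produces, via FIFO, a positive-measure set of exit times near $\T_e(\theta)$ on which $F_e^-$ strictly grows; this forces $\l_v(\phi) = \T_e(\theta)$ and hence $e \in E'_\phi$, so $\theta \in \l_u(\Phi_e)$.

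The main obstacle will be the inactive-arc case in the forward direction: since $\l_u$ is only nondecreasing and not strictly monotone, the preimage of a single time under $\l_u$ may be an interval of particles, and the argument must leverage the joint monotonicity of $\l_u$ and $\l_v$ together with the Lipschitz continuity of $F_e^+$ to push the ``bad'' entry times into a null set. The same subtlety reappears, in a milder form, when translating the pointwise-in-$\phi$ identity back into the a.e.\ statement~\eqref{eqn:Nash_condition_active}.
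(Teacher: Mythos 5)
Your reduction of the source identity $F_i(\phi) = \l_{\source_i}(\phi)\,r_i$ to condition (N1) is correct and matches the paper's. The substance is the arc identity, and here you diverge from the paper's proof in both directions.

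For ``$\Rightarrow$'', you split on whether $e$ is active for $\phi$, dispatching the active case by FIFO and, in the inactive case, arguing that no mass exits $e$ during $(\l_v(\phi), \T_e(\l_u(\phi)))$. The paper instead picks the largest $\xi \leq \phi$ with $F_e^+(\l_u(\xi)) = F_e^-(\l_v(\phi))$ via the intermediate value theorem and shows $f_e^+ = 0$ a.e.\ on $(\l_u(\xi), \l_u(\phi)]$. Your route is conceptually cleaner but leaves real work unstated: translating ``$F_e^-$ is constant on the exit interval'' into ``the corresponding entry times carry no inflow'' needs a careful change of variables through the merely nondecreasing map $\T_e$, and your appeal to (N2), which only holds for almost every $\theta$, must be organized as a contrapositive measure bound rather than particle by particle. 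These gaps are fillable, but they are exactly what the paper's IVT device is designed to sidestep.

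For ``$\Leftarrow$'', there is a genuine gap. You fix $\theta$ with $f_e^+(\theta) > 0$, choose $\phi$ with $\l_u(\phi) = \theta$, and claim that ``positive inflow on a neighborhood of $\theta$ produces a positive-measure set of exit times near $\T_e(\theta)$ on which $F_e^-$ strictly grows; this forces $\l_v(\phi) = \T_e(\theta)$.'' Two problems. First, $f_e^+$ is only locally integrable, so $f_e^+(\theta) > 0$ at a single time gives no information on a neighborhood; you would have to invoke Lebesgue differentiation and restrict to (one-sided) Lebesgue points. Second, even granting positive inflow on a neighborhood, the resulting growth of $F_e^-$ could sit entirely to the right of $\T_e(\theta)$, which your constancy conclusion $F_e^-(\l_v(\phi)) = F_e^-(\T_e(\theta))$ does not constrain; you would need to argue specifically about entry times just to the left of $\theta$ whose exit times land in $[\l_v(\phi), \T_e(\theta))$. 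The paper avoids both issues by taking an $\varepsilon$-interval of particles around $\phi$ on which $e$ stays inactive (this exists by continuity of $\l_v$ and $\T_e\circ\l_u$) and sandwiching $\int_{\l_u(\phi-\varepsilon)}^{\l_u(\phi+\varepsilon)} f_e^+$ between $0$ and $F_e^+(\l_u(\phi+\varepsilon)) - F_e^-(\l_v(\phi+\varepsilon)) = 0$, giving $f_e^+ = 0$ a.e.\ on an open interval with no differentiation-theorem machinery. If you rewrite the backward direction along those lines, your argument goes through.
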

  
  Lemma~\ref{lemma:nash_flow_characterization} \textcolor{myColor}{[proven in \ref{ap:lemma_nash_flow_characterization}]} motivates to consider the \emph{underlying static flow} for every particle~$\phi$, which is defined by
  $x_e(\phi) \coloneqq F_e^+(\l_u(\phi)) =  F_e^-(\l_v(\phi))$ and
  $x_i(\phi) \coloneqq F_i(\phi) = \l_{\source_i} (\phi) \cdot r_i$.
  For a fixed $\phi$ this is indeed a static $\Sources$-$\sink$-flow since the integral of \eqref{eqn:flow_conservation} over $[0, \l_v(\phi)]$ yields
  \begin{equation} \label{eqn:x_is_flow}
  \sum_{e\in \delta^+(v)} x_e(\phi) - \sum_{e \in \delta^-(v)} x_e(\phi) = \begin{cases}
    0 & \text{ if } v \in V \backslash (\Sources \cup \set{\sink}), \\ 
    \l_{\source_i} (\phi) \cdot r_i = x_i(\phi) & \text{ if } v = \source_i \in \Sources.
    \end{cases}\end{equation}
 Let $x'_e$, $x'_i$, and $\l'_v$ denote the derivative functions, which exist almost everywhere, since the $x$- and $\l$-functions are Lipschitz continuous. It is possible to determine the inflow function of every arc $e = uv$ from these derivatives, since
  $x'_e(\phi) = f_e^+(\l_u(\phi)) \cdot \l'_u(\phi)$.
  Moreover, the inflow distribution is given by
  $f_i(\phi) = \l'_{\source_i}(\phi) \cdot r_i$.
  Consequently, a Nash flow over time is completely characterized by these derivatives. 
  Note that differentiating~\eqref{eqn:x_is_flow} yields that $x'(\phi)$ also forms a static $\Sources$-$\sink$-flow, which has very specific properties that are characterized in the following.

  \subparagraph*{Thin flows with resetting for multiple sources and a single sink} 
  
  A thin flow with resetting is a static flow defined on a subgraph of $G$ characterizing the strategies of particles in a flow interval of a Nash flow over time. The definition of thin flows with resetting given in this article
  generalizes the thin flows with resetting introduced in~\cite{Koch_2009} and the normalized thin flows with resetting
  from~\cite{Cominetti_2015}, in order to suit the multi-source setting.
  
  Let $E' \subseteq E$ be a subset of arcs such that the subgraph $G' = (V, E')$ is acyclic and every node is reachable by
  a source within $G'$. Note that not every node needs to be able to reach sink~$\sink$. Additionally, we
  consider a subset of arcs $E^* \subseteq E'$, called \emph{resetting arcs}. Moreover, let $K(E', x'_1, \dots,
  x'_{\n})$ be the set of all static $\Sources$-$\sink$-flows in~$G'$ with inflow~$x'_i$ at source~$\source_i$ for~$x'_i \geq 0$ and $x'_1 + \dots + x'_{\n} = 1$.
  \begin{definition}[Thin flow with resetting] \label{defi:thin_flow}
    A vector $(x'_i)_{i = 1}^\n$, with $x'_i \geq 0$ and $x'_1 + \dots + x'_{\n} = 1$, and a static flow $(x'_e)_{e \in E}
    \in K(E', x'_1, \dots, x'_{\n})$ together with a node labeling $(\l'_v)_{v \in V}$ is called \emph{thin flow with
    resetting} on $E^* \subseteq E'$ if:
    \begin{alignat}{2}
    \l'_{\source_i} &= x'_i / r_i &&\text{ for all } i = 1, \dots, \n, \label{eqn:l'_s} \tag{TF1}\\ 
    \l'_{\source_i} &\leq \min_{e = u\source_i \in E'} \rho_e(\l'_u, x'_e) \quad && \text{ for all } i = 1, \dots, \n,
    \label{eqn:l'_s_min}\tag{TF2}\\ 
    \l'_v &= \min_{e = uv \in E'} \rho_e(\l'_u, x'_e) \quad && \text{ for all } v \in V \backslash \Sources,
    \label{eqn:l'_v_min}\tag{TF3}\\ 
    \l'_v &= \rho_e(\l'_u, x'_e) && \text{ for all } e = uv \in E' \text{ with } x'_e > 0, \label{eqn:l'_v_tight}\tag{TF4}
    \end{alignat}
    \[\text{ where } \qquad \rho_e(\l'_u, x'_e) \coloneqq \begin{cases}
    x'_e / \nu_e & \text{ if } e = uv \in E^*,\\
    \max\Set{\l'_u, x'_e / \nu_e} & \text{ if } e = uv \in E'\backslash E^*.  
    \end{cases}\]
  \end{definition}
  The next theorem states that the derivatives of a Nash flow over time $f$ form almost everywhere a
  thin flow with resetting on the arcs with positive queues. Recall that $E'_\phi$ is the subset of arcs that are
  active for $\phi$, and let 
  $E^*_\phi \coloneqq \Set{e = uv \in E | z_e(\l_u(\phi) + \tau_e) > 0}$ 
  be the set of arcs where the particle $\phi$ would experience a queue.
  \begin{theorem} \label{thm:l'_equations}
    For a Nash flow over time $((f^+_e)_{e\in E}, (f_i)_{i = 1}^\n)$, the derivative labels $(x'_i(\phi))_{i =
    1}^\n$ and $(x'_e(\phi))_{e\in E'_\phi}$ together with $(\l'_v(\phi))_{v\in V}$ form a thin flow with resetting on
    $E^*_\phi$ in the current shortest paths network $G'_\phi = (V, E'_\phi)$, for almost all~$\phi \in \Flow$.
  \end{theorem}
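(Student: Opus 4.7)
The plan is to leverage the characterization in Lemma \ref{lemma:nash_flow_characterization}, which lets us identify the underlying static flow $x_e(\phi) = F^+_e(\l_u(\phi)) = F^-_e(\l_v(\phi))$ and $x_i(\phi) = F_i(\phi) = \l_{\source_i}(\phi) \cdot r_i$, and then pass to derivatives. Since all of these quantities are Lipschitz continuous, they are differentiable at almost every $\phi \in \Flow$. I would restrict attention to such a full-measure set of ``good'' particles on which the chain rule applies simultaneously to every composition of interest. Differentiating $x_i(\phi) = \l_{\source_i}(\phi) \cdot r_i$ immediately yields \eqref{eqn:l'_s}. Differentiating $x_e(\phi) = F^+_e(\l_u(\phi)) = F^-_e(\l_v(\phi))$ gives the two chain-rule identities
\begin{equation*}
x'_e(\phi) = f^+_e(\l_u(\phi)) \cdot \l'_u(\phi) = f^-_e(\l_v(\phi)) \cdot \l'_v(\phi),
\end{equation*}
and differentiating the integrated flow-conservation equation \eqref{eqn:x_is_flow} confirms that $(x'_e(\phi))_{e \in E}$ is a static $\Sources$-$\sink$-flow with source values $x'_i(\phi)$ that sum to one.

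Next, I would show that $x'(\phi)$ is supported on the active arcs, so that $x'(\phi) \in K(E'_\phi, x'_1(\phi), \dots, x'_\n(\phi))$. If $\l'_u(\phi) = 0$, this is immediate from the product formula above. Otherwise $\l_u$ is locally strictly increasing near~$\phi$, so for any $e = uv \notin E'_\phi$ an elementary argument using continuity and monotonicity of $\l_u$ shows that $\l_u(\phi) \notin \l_u(\Phi_e)$; the Nash condition \eqref{eqn:Nash_condition_active} then forces $f^+_e(\l_u(\phi)) = 0$, and hence $x'_e(\phi) = 0$.

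The heart of the proof is the arc-wise identity for each active arc $e = uv \in E'_\phi$. Activeness gives $\l_v(\phi) = \T_e(\l_u(\phi)) = \l_u(\phi) + \tau_e + z_e(\l_u(\phi) + \tau_e)/\nu_e$, and I would differentiate this relation and split into two cases according to \eqref{eqn:z'_definition}. When $e \in E^*_\phi$, the queue is positive and $z'_e(\l_u(\phi)+\tau_e) = f^+_e(\l_u(\phi)) - \nu_e$, so a short computation gives $\l'_v(\phi) = x'_e(\phi)/\nu_e$. When $e \in E'_\phi \setminus E^*_\phi$, the queue is empty and $z'_e(\l_u(\phi)+\tau_e) = \max\{f^+_e(\l_u(\phi)) - \nu_e,\,0\}$, leading to $\l'_v(\phi) = \max\{\l'_u(\phi),\, x'_e(\phi)/\nu_e\}$. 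In both cases this is exactly $\l'_v(\phi) = \rho_e(\l'_u(\phi), x'_e(\phi))$, which proves \eqref{eqn:l'_v_tight} and, taken over all active arcs entering $v$, establishes the min-characterizations \eqref{eqn:l'_v_min} and \eqref{eqn:l'_s_min} as well (with equality whenever some active arc enters the source).

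The main obstacle I expect is not the actual computations but the careful measure-theoretic bookkeeping: one has to fix a single full-measure set of $\phi$ on which every relevant Lipschitz function is differentiable and every chain-rule application to a composition is justified. In particular, the differentiation of $z_e$ at the shifted time $\l_u(\phi)+\tau_e$ requires that this time avoids a null set determined by the Nash flow, and the support argument for $x'(\phi)$ requires turning the almost-everywhere Nash condition \eqref{eqn:Nash_condition_active} into a pointwise statement at the specific time $\l_u(\phi)$; both are standard but need to be handled cleanly.
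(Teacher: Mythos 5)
Your overall plan is correct and reaches the same conclusions as the paper, but the core step takes a genuinely different route. To obtain \eqref{eqn:l'_v_tight} and the two min-characterizations you propose to ``differentiate'' the activeness relation $\l_v(\phi)=\T_e(\l_u(\phi))$ arcwise. Stated literally that relation is only a pointwise equality, so one cannot differentiate it as written; the justification you need (and which your phrasing does not quite spell out) is that $\T_e(\l_u(\,\cdot\,))-\l_v(\,\cdot\,)\geq 0$ everywhere by the Bellman equations, with value zero at $\phi$, hence a stationary point, so that at a.e.\ $\phi>0$ where both compositions are differentiable one gets $\l'_v(\phi)=(\T_e\circ\l_u)'(\phi)=\rho_e(\l'_u(\phi),x'_e(\phi))$ for \emph{every} active arc~$e=uv$. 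With that small repair your argument is valid, and in fact slightly sharper than required: it yields \eqref{eqn:l'_v_tight}, \eqref{eqn:l'_v_min}, and \eqref{eqn:l'_s_min} in a single stroke. The paper proceeds differently: \eqref{eqn:l'_s_min} and \eqref{eqn:l'_v_min} come from a general differentiation rule for finite minima (Lemma~\ref{lem:diff_rule_for_min}) applied to the Bellman equations~\eqref{eqn:bellman}, while \eqref{eqn:l'_v_tight} is obtained independently from the second chain-rule identity $x'_e(\phi)=f_e^-(\l_v(\phi))\cdot\l'_v(\phi)$ combined with the outflow formula~\eqref{eqn:defi_outflow}, never invoking the queue derivative for that condition. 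Both routes are sound; yours is more unified, whereas the paper's keeps the tight condition decoupled from the minimum argument. Your treatment of the support of $x'$ on $E'_\phi$ is actually more explicit than the paper's (which leaves that step implicit), and the measure-theoretic caveats you raise about pushing chain rules and the a.e.\ Nash condition through $\l_u$ are precisely the ones the paper's proof also has to navigate.
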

  The intuitive idea is that $x'_e / \nu_e$ describes the \emph{congestion} of arc $e$ and $\rho_e(\l'_u, x'_e)$ is the \emph{congestion} of all paths to $v$ using $e$. The higher this congestion is, the longer it will take for following particles to reach $v$, which is captured by a high derivative of the earliest arrival time $\l'_v$.   
  If we have $\l'_v < \rho_e(\l'_u, x'_e)$ this means that $e$ leaves the current shortest paths network, and therefore it cannot be used by following particles, i.e., $x'_e = 0$.
  \textcolor{myColor}{[Detailed proof in \ref{ap:thm_l'_equations}.]}
  
  The reverse of Theorem~\ref{thm:l'_equations} is also true in the sense that we can use thin flows with resetting to
  construct a Nash flow over time. For this we first show that there always exists a thin flow with resetting for any
  acyclic graph and any subset of resetting arcs.
  \begin{theorem} \label{thm:existence_of_NTF}
    Consider an acyclic graph~$G' = (V, E')$ with sources~$\Sources$, sink~$\sink$, capacities~$\nu_e$, and a subset of arcs~$E^* \subseteq E'$ and suppose every node is reachable by a source. Then there exists a thin flow $\left( (x'_i)_{i = 1}^\n, (x'_e)_{e \in E}, (\l'_v)_{v\in
    V}\right)$ with resetting on~$E^*$.
  \end{theorem}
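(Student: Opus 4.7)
My plan is to reduce to the single-source existence result for (normalized) thin flows with resetting established by Cominetti, Correa, and Larr\'e~\cite{Cominetti_2015}. Starting from $G'$, I would construct an extended acyclic graph $\bar{G}' = (\bar{V}, \bar{E}')$ by adding a fresh super-source $\bar{\source}$ together with one new arc $\bar{\source}\source_i$ of capacity $r_i$ for each $i = 1, \dots, \n$, and set $\bar{E}^* \coloneqq E^* \cup \{\bar{\source}\source_i : i = 1, \ldots, \n\}$, declaring all new arcs to be resetting. Since the new arcs only leave $\bar{\source}$, acyclicity is preserved and every node of $\bar{V}$ is reachable from $\bar{\source}$. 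Applying the single-source existence theorem to $\bar{G}'$ with unique source $\bar{\source}$, sink $\sink$, total inflow rate $1$, and resetting set $\bar{E}^*$ then produces a static $\bar{\source}$-$\sink$-flow $(\bar{x}'_e)_{e \in \bar{E}'}$ together with labels $(\bar{\l}'_v)_{v \in \bar{V}}$ satisfying the single-source analogs of~\eqref{eqn:l'_v_min} and~\eqref{eqn:l'_v_tight}.

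From this object I would define $x'_i \coloneqq \bar{x}'_{\bar{\source}\source_i}$ for each $i = 1, \ldots, \n$, $x'_e \coloneqq \bar{x}'_e$ for $e \in E'$, and $\l'_v \coloneqq \bar{\l}'_v$ for $v \in V$. Flow conservation at $\bar{\source}$ in $\bar{G}'$ gives $x'_1 + \dots + x'_{\n} = 1$, and flow conservation at $\source_i$ inside $\bar{G}'$ translates to $\sum_{e\in\delta^+(\source_i)} x'_e - \sum_{e\in\delta^-(\source_i)} x'_e = x'_i$, so that $(x'_e)_{e\in E'} \in K(E', x'_1, \dots, x'_{\n})$. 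Conditions \eqref{eqn:l'_v_min} for $v \in V\setminus\Sources$ and \eqref{eqn:l'_v_tight} for $e \in E'$ are inherited verbatim from the corresponding single-source conditions on $\bar{G}'$, because the classification of arcs into resetting and non-resetting is unchanged by the reduction and therefore so is the function $\rho_e$.

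It remains to verify \eqref{eqn:l'_s} and \eqref{eqn:l'_s_min} at each $\source_i$. Since $\bar{\source}\source_i$ is a resetting arc of capacity $r_i$ carrying flow $x'_i$, its $\rho$-value equals $x'_i/r_i$. The single-source analog of~\eqref{eqn:l'_v_min} applied at $\source_i$ (which is not a source in $\bar{G}'$) reads
\[
\l'_{\source_i} \;=\; \min\bigl(\{x'_i/r_i\} \cup \{\rho_e(\l'_u, x'_e) : e = u\source_i \in E'\}\bigr),
\]
which immediately yields \eqref{eqn:l'_s_min}. If $x'_i > 0$, the single-source analog of~\eqref{eqn:l'_v_tight} applied to $\bar{\source}\source_i$ forces $\l'_{\source_i} = x'_i/r_i$, giving \eqref{eqn:l'_s}. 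If $x'_i = 0$, then $x'_i/r_i = 0$ is itself a lower bound in the minimum above, and since all $\rho$-values, and hence all $\l'_v$, are non-negative, $\l'_{\source_i} = 0 = x'_i/r_i$, so \eqref{eqn:l'_s} again holds. I expect this case distinction around~\eqref{eqn:l'_s}---in particular the degenerate case $x'_i = 0$ together with the design choice to make the new arcs resetting---to be the main subtle point; once it is settled, the remainder of the argument is a direct consequence of the single-source existence theorem.
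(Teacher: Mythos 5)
Your proposal is correct but takes a genuinely different route from the paper. The paper re-runs the Kakutani fixed-point argument of Cominetti, Correa, and Larr\'e directly in the multi-source setting: it defines the compact convex set $A$ of pairs $\bigl((x'_i)_i,(x'_e)_e\bigr)$ with $x'_i\geq 0$, $\sum_i x'_i = 1$, and $(x'_e)_e \in K(E',x'_1,\dots,x'_\n)$, and a set-valued map $\Gamma$ that forbids sending flow through sources with $\l'_{\source_i}<x'_i/r_i$ or through arcs with $\l'_v<\rho_e(\l'_u,x'_e)$, then verifies the hypotheses of Kakutani's theorem and reads off~\eqref{eqn:l'_s}--\eqref{eqn:l'_v_tight} from a fixed point. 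You instead treat the single-source existence theorem as a black box and reduce to it by attaching a super-source $\bar\source$ with resetting arcs $\bar\source\source_i$ of capacity $r_i$; the crucial design choice is to put these arcs into $E^*$, which makes $\rho_{\bar\source\source_i}=\bar x'_{\bar\source\source_i}/r_i$ \emph{independent} of the normalization $\bar\l'_{\bar\source}=1$, so the super-source label never contaminates the rest of the instance. Your case split at $x'_i=0$ is exactly the right point to worry about and you handle it correctly (with a small slip of wording: $x'_i/r_i=0$ is an \emph{upper} bound on $\l'_{\source_i}$ via the minimum, while non-negativity of all $\rho$-values supplies the lower bound). Conceptually, the reduction buys modularity and mirrors the paper's own super-sink construction in Section~\ref{sec:demands}, so it is in the spirit of the paper's theme; the paper's direct argument, by contrast, avoids any dependence on the precise statement of the cited single-source theorem and makes the fixed-point structure explicit. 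Both proofs are sound; they are not essentially the same.
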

  The proof is essentially given in~\cite{Cominetti_2015} and is only slightly modified to fit the new definition of a
  thin flow with resetting. The key idea is to use a set-valued function in order to apply the Kakutan'i fixed-point
  theorem. \textcolor{myColor}{[Detailed proof in \ref{ap:thm_existence_of_NTF}.]}
  
  \subparagraph*{Constructing Nash flows}
  Note that in a dynamic equilibrium no particle can overtake any other particle, and therefore the choice of strategy for $\phi$ only depends on the strategies of the particles in $[0, \phi)$. So we may assume that the particles decide in order of priority.
  More precisely, given a Nash flow over time up to some $\phi \in \Flow$,
  it is possible to extend it by using a thin flow on the $G'_\phi$ with resetting on $E^*_\phi$.
  
  A \emph{restricted Nash flow
  over time} on $[0,\phi]$ is a Nash flow over time where only the particles in $[0, \phi]$ are considered, i.e., for $i
  = 1,\dots, n$ we have $f_i(\varphi) = 0$ for all $\varphi > \phi$ and for each arc~$e = uv \in E$ we have
  $f_e(\l_u(\theta)) = 0$ for all $\theta > \l_u(\phi)$. But the Nash flow conditions
  (\ref{eqn:Nash_condition_source}) and (\ref{eqn:Nash_condition_active}) are satisfied for almost all particles in $[0,
  \phi]$ and almost all times in $[0, \l_u(\phi)]$.
  
  Since all previous results carry over to restricted Nash flows over time, the earliest arrival times $(\l_v)_{v\in V}$ are well-defined for particles in $[0, \phi]$, and therefore it is possible to determine~$G'_\phi = (V,
  E'_\phi)$ and $E^*_\phi$ \textcolor{myColor}{[See Lemma~\ref{lemma:queue_implies_active} in \ref{ap:lemma_queue_implies_active}]}. To extend a restricted Nash flow over time, we first compute a thin flow on $G'_\phi$ with resetting on $E^*_\phi$, and then extend the labels linearly as follows.
  For some $\alpha > 0$ we get for all $v \in V$, $e \in E$, $i = 1, \dots, \n$, and~$\varphi \in \:(\phi, \phi + \alpha]$ that
  \[\l_v(\varphi) \coloneqq \l_v(\phi) + (\varphi - \phi) \cdot \l'_v \quad \text{ and } \quad \begin{aligned}
  x_e(\varphi) &\coloneqq x_e(\phi) + (\varphi - \phi) \cdot x'_e, \\
  x_i(\varphi) &\coloneqq x_i(\phi) + (\varphi - \phi) \cdot x'_i. 
  \end{aligned}\]
  Based on this we can extend the inflow function and the inflow distribution, which gives us     
  \[f^+_e(\theta) \coloneqq \frac{x'_e}{\l'_u} \quad \text{for } \theta \in (\l_u(\phi), \l_u(\phi + \alpha)] \qquad \text{and} \qquad
  f_i(\varphi) \coloneqq x'_i \;=\; \l'_{\source_i} \cdot r_i \quad \text{for } \varphi \in (\phi, \phi + \alpha]\]
  for all $e = uv \in E$ and all $i = 1, \dots, \n$.
  Note that in the case of $\l'_u = 0$ the time interval is empty. Furthermore, it turns out that $f^-_e(\theta) = x'_e/\l'_v$ for all $\theta \in (\l_v(\phi), \l_v(\phi + \alpha)]$. \textcolor{myColor}{[See Lemma~\ref{lemma:outflow_of_extension} in \ref{ap:lemma_outflow_of_extension}.]}
  This extended flow over time together with the extended inflow distribution is called \emph{$\alpha$-extension} and it extends the Nash flow over time as long as the $\alpha$
  stays within the the following bounds:
  \begin{align}
  \l_v(\phi) - \l_u(\phi) + \alpha (\l'_v - \l'_u) &\geq \tau_e  \qquad \text{for all } e = uv \in E^*
  \label{eqn:alpha_resetting}\\
  \l_v(\phi) - \l_u(\phi) + \alpha (\l'_v - \l'_u) &\leq \tau_e  \qquad \text{for all } e = uv \in E \backslash E'.
  \label{eqn:alpha_others} 
  \end{align}
  The first inequality ensures that no flow can traverse an arc faster than its transit time. It holds with equality when
  the queue of $e$ vanishes at time~$\l_u(\phi + \alpha)$. The second inequality makes sure that all non-active arcs are unattractive for all particles in~$[\phi, \phi + \alpha)$. When it holds with
  equality the arc~$e$ becomes active for $\phi + \alpha$. When such an event occurs we must compute a new thin flow with resetting because either a resetting arc has become non-resetting or a
  non-active arc has become active. It is easy to see that there exists an $\alpha > 0$ that satisfies these
  inequalities since $\l_v(\phi) > \l_u(\phi) + \tau_e$ for arcs $e \in
  E^*_\phi$ and $\l_v(\phi) < \l_u(\phi) + \tau_e$ for arcs~$e \not \in E'_\phi$. \textcolor{myColor}{[See Lemma~\ref{lemma:queue_implies_active} in \ref{ap:lemma_queue_implies_active}.]}

    \begin{lemma} \label{lemma:extension_is_flow}
    The $\alpha$-extension forms a flow over time and the extended $\l$-labels coincide with the earliest arrival times,
    i.e., satisfy the Bellman equations~\eqref{eqn:bellman} for all $\varphi \in (\phi, \phi + \alpha]$.
    \end{lemma}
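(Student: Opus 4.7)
The plan is to verify both assertions in parallel on the interval $(\phi,\phi+\alpha]$, using the thin-flow identities (\ref{eqn:l'_s})--(\ref{eqn:l'_v_tight}) together with the bounds (\ref{eqn:alpha_resetting}) and (\ref{eqn:alpha_others}) on $\alpha$. I shall take for granted the outflow lemma, which gives $f_e^-(\theta)=x_e'/\l_v'$ on $(\l_v(\phi),\l_v(\phi+\alpha)]$, and the lemma stating that $E_\phi^*\subseteq E_\phi'$, both of which are cited in the excerpt.

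\textbf{Flow conservation.} First I would verify that $(f_e^+)_{e\in E}$ extends to a flow over time. Fix a non-sink vertex $v$ with $\l_v'>0$ and a time $\theta\in(\l_v(\phi),\l_v(\phi+\alpha)]$. By construction every outgoing arc $e=vw$ carries inflow $x_e'/\l_v'$ in this window, and by the outflow lemma every incoming arc $e=uv$ carries outflow $x_e'/\l_v'$. Dividing the static flow-conservation equation for $x'\in K(E_\phi',x_1',\dots,x_n')$ by $\l_v'$ yields exactly~(\ref{eqn:flow_conservation}); at a source $s_i$ the right-hand side becomes $x_i'/\l_{s_i}'=r_i$ by~(\ref{eqn:l'_s}). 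If $\l_v'=0$ the relevant time window is empty, so there is nothing to verify at $v$.

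\textbf{Bellman's equations.} I fix $\beta\in(0,\alpha]$, set $\varphi=\phi+\beta$, and compute $\T_e(\l_u(\varphi))$ by a case distinction on $e=uv$. On a resetting arc $e\in E_\phi^*$, bound~(\ref{eqn:alpha_resetting}) keeps $z_e$ strictly positive throughout the relevant window, so~(\ref{eqn:z'_definition}) with constant inflow rate $x_e'/\l_u'$ integrates to $z_e(\l_u(\varphi)+\tau_e)=z_e(\l_u(\phi)+\tau_e)+\beta(x_e'-\l_u'\nu_e)$; substituting and telescoping gives
\[
\T_e(\l_u(\varphi))=\T_e(\l_u(\phi))+\beta\,x_e'/\nu_e=\l_v(\phi)+\beta\,\l_v'=\l_v(\varphi),
\]
using (\ref{eqn:l'_v_tight}) with $\rho_e=x_e'/\nu_e$. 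On a non-resetting active arc $e\in E_\phi'\setminus E_\phi^*$ the queue starts at zero; integrating (\ref{eqn:z'_definition}) yields $z_e(\l_u(\varphi)+\tau_e)=\beta\max\{0,x_e'-\l_u'\nu_e\}$, and the identity $a+\max\{0,b-a\}=\max\{a,b\}$ simplifies the result to $\T_e(\l_u(\varphi))=\l_v(\phi)+\beta\,\rho_e(\l_u',x_e')$. By~(\ref{eqn:l'_v_tight}) this equals $\l_v(\varphi)$ whenever $x_e'>0$, and by~(\ref{eqn:l'_v_min}) it otherwise dominates $\l_v(\varphi)$. Finally, on an inactive arc $e\notin E_\phi'$ we have $x_e'=0$ and $z_e\equiv0$ on the relevant window, so~(\ref{eqn:alpha_others}) directly yields $\T_e(\l_u(\varphi))=\l_u(\varphi)+\tau_e\geq\l_v(\varphi)$. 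Taking the minimum over the incoming arcs of $v$ gives the non-source case of~(\ref{eqn:bellman}).

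\textbf{Source nodes and main obstacle.} The source version of~(\ref{eqn:bellman}) follows by the same template, additionally using $\T_i(\varphi)=\T_i(\phi)+\beta\,x_i'/r_i=\l_{s_i}(\phi)+\beta\,\l_{s_i}'=\l_{s_i}(\varphi)$ by~(\ref{eqn:l'_s}), so that $\T_i(\varphi)$ attains the minimum, while (\ref{eqn:l'_s_min}) handles the incoming arcs exactly as~(\ref{eqn:l'_v_min}) does above. The main technical obstacle is the queue bookkeeping underlying the case distinction: one has to argue carefully that the constant inflow prescribed by the extension, combined with the two $\alpha$-bounds, forces $z_e$ to stay strictly positive throughout the interval on $E_\phi^*$ (so that the first branch of~(\ref{eqn:z'_definition}) applies) and identically zero off $E_\phi'$, and one must also treat the degenerate situation $\l_u'=0$ where the inflow window is empty. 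Once this is cleanly set up, every equality $\T_e\circ\l_u=\l_v$ on an active arc reduces to the algebraic identity $\rho_e(\l_u',x_e')=\l_v'$ provided by~(\ref{eqn:l'_v_tight}).
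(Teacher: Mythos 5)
Your proposal follows essentially the same route as the paper: flow conservation drops out of the static conservation of~$x'$ divided by~$\l'_v$, and the Bellman equations are verified arc by arc by computing $\T_e\circ\l_u$ under the constant prescribed inflow and invoking the thin-flow conditions (\ref{eqn:l'_v_min})/(\ref{eqn:l'_v_tight}) together with the $\alpha$-bounds. The only substantive difference is the case organization: the paper splits the active arcs into ``no queue forming'' ($e\in E'_\phi\setminus E^*_\phi$, $\l'_u\geq x'_e/\nu_e$) versus ``queue present or forming'' ($e\in E^*_\phi$ or $\l'_u<x'_e/\nu_e$), whereas you split along $E^*_\phi$ versus $E'_\phi\setminus E^*_\phi$ and absorb both sub-behaviors of the latter via the identity $a+\max\{0,b-a\}=\max\{a,b\}$, which directly reproduces~$\rho_e$. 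That is a slightly tidier bookkeeping of the same computation, not a different argument, and it is correct (your ``strictly positive throughout'' on $E^*_\phi$ is really ``positive on $(\phi,\phi+\alpha)$ and nonnegative at the endpoint,'' which is exactly what (\ref{eqn:alpha_resetting}) guarantees and is all that is needed).
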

  The flow conservation follows immediately from the flow conservation of $x'$ and the Bellman equations are shown by
  distinguishing three cases. If the arc is non-active it stays non-active during the extended interval. For active, but
  non-resetting arcs that do not build up a queue, we obtain $\l_v(\phi + \xi) \leq \T_e(\l_u(\phi + \xi))$ from
  \eqref{eqn:l'_v_min} with equality if $\l'_v = \rho_e(\l'_u, x'_e)$. The same is true for resetting arcs or arcs that build up a queue, even though, the proof is a bit
  more technical. \textcolor{myColor}{[See \ref{ap:lemma_extension_is_flow} for a detailed proof.]}

  \begin{theorem} \label{thm:extension}
  Given a restricted Nash flow over time $((f^+_e)_{e\in E}, (f_i)_{i = 1}^\n)$ on $[0, \phi]$ and $\alpha > 0$    
  satisfying~\eqref{eqn:alpha_resetting} and~\eqref{eqn:alpha_others}, the $\alpha$-extension is a restricted
  Nash flow over time on~$[0, \phi + \alpha]$.
  \end{theorem}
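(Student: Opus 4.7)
The strategy is to reduce everything to Lemma~\ref{lemma:extension_is_flow} together with the thin flow properties from Definition~\ref{defi:thin_flow}. Since the original flow is already a restricted Nash flow over time on $[0, \phi]$, only the newly-extended range---particles $\varphi \in (\phi, \phi + \alpha]$ for condition~\eqref{eqn:Nash_condition_source} and times $\theta \in (\l_u(\phi), \l_u(\phi + \alpha)]$ for condition~\eqref{eqn:Nash_condition_active}---needs to be checked. First I would invoke Lemma~\ref{lemma:extension_is_flow} to conclude that the extension is a bona fide flow over time and that the extended labels $(\l_v)_{v \in V}$ satisfy Bellman's equations~\eqref{eqn:bellman} on the new interval. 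This pins down the current shortest paths network $G'_\varphi$ for $\varphi \in (\phi, \phi + \alpha]$, which is the bedrock for both Nash conditions.

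For~\eqref{eqn:Nash_condition_source}, the definition of the extension sets $f_i(\varphi) = \l'_{\source_i} \cdot r_i$ for $\varphi \in (\phi, \phi + \alpha]$. Integration yields $F_i(\varphi) = F_i(\phi) + (\varphi - \phi) \cdot \l'_{\source_i} \cdot r_i$, hence $\T_i(\varphi) = \T_i(\phi) + (\varphi - \phi) \cdot \l'_{\source_i}$. By construction the extended earliest arrival time is $\l_{\source_i}(\varphi) = \l_{\source_i}(\phi) + (\varphi - \phi) \cdot \l'_{\source_i}$. Since $\l_{\source_i}(\phi) = \T_i(\phi)$ holds by the restricted Nash flow assumption, the two affine functions coincide on the new interval, proving~\eqref{eqn:Nash_condition_source}.

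For~\eqref{eqn:Nash_condition_active}, fix an arc $e = uv$ and a time $\theta \in (\l_u(\phi), \l_u(\phi + \alpha)]$ with $f^+_e(\theta) > 0$. By definition of the extension $f^+_e(\theta) = x'_e / \l'_u$, so both $x'_e > 0$ and $\l'_u > 0$. Property~\eqref{eqn:l'_v_tight} of the underlying thin flow then forces $\l'_v = \rho_e(\l'_u, x'_e)$. Since $\l'_u > 0$, the map $\varphi \mapsto \l_u(\varphi) = \l_u(\phi) + (\varphi - \phi) \cdot \l'_u$ is a strictly increasing bijection between $(\phi, \phi + \alpha]$ and $(\l_u(\phi), \l_u(\phi + \alpha)]$, so there is a unique $\varphi$ in the former mapping to $\theta$. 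The case analysis performed in the proof of Lemma~\ref{lemma:extension_is_flow} shows that any arc $e \in E'_\phi$ with $\l'_v = \rho_e(\l'_u, x'_e)$ remains active for all $\varphi \in (\phi, \phi + \alpha]$, so $e \in E'_\varphi$, i.e., $\varphi \in \Phi_e$ and hence $\theta \in \l_u(\Phi_e)$. The degenerate case $\l'_u = 0$ makes the relevant time interval empty, leaving~\eqref{eqn:Nash_condition_active} vacuously satisfied for arcs out of such $u$.

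The main obstacle is cleanly separating the particle parametrization $\varphi \in \Flow$ from the time parametrization $\theta \in [0, \infty)$ when verifying~\eqref{eqn:Nash_condition_active}, and ensuring that the measure-zero sets on which derivatives such as $x'_e$, $\l'_v$ are undefined cause no trouble. A related subtle point, which one should spell out rather than leave implicit, is that \emph{only} arcs with $\l'_v = \rho_e(\l'_u, x'_e)$ carry positive $x'_e$ in the thin flow---this is exactly the contrapositive of~\eqref{eqn:l'_v_tight} and is what links the structural thin-flow condition to the equilibrium property~\eqref{eqn:Nash_condition_active}.
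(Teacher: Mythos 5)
Your proof is correct, but it takes a different route from the paper's. The paper invokes Lemma~\ref{lemma:nash_flow_characterization}, the equivalent cumulative-flow characterization of Nash flows over time, and simply checks that $F_e^+(\l_u(\phi+\xi)) = F_e^-(\l_v(\phi+\xi))$ and $F_i(\phi+\xi) = \l_{\source_i}(\phi+\xi)\cdot r_i$ propagate linearly along the extension (using the computed outflow rate $x'_e/\l'_v$ from Lemma~\ref{lemma:outflow_of_extension}); both conditions \eqref{eqn:Nash_condition_source} and \eqref{eqn:Nash_condition_active} then follow at once from the backward direction of that lemma, without any per-arc case analysis. You instead verify \eqref{eqn:Nash_condition_source} and \eqref{eqn:Nash_condition_active} directly from the definitions, pinning \eqref{eqn:Nash_condition_active} down via the contrapositive of \eqref{eqn:l'_v_tight} and the observation that any $e\in E'_\phi$ with $\l'_v=\rho_e(\l'_u,x'_e)$ stays active throughout the extension. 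Your reasoning is sound, but note that this last observation is drawn from the \emph{proof} of Lemma~\ref{lemma:extension_is_flow} rather than from its stated conclusion (which only asserts that the extended labels satisfy Bellman's equations), so to be self-contained you would either promote that observation to a separate claim or re-derive it — e.g., $\T_e(\l_u(\varphi))-\l_v(\varphi)$ is nonnegative, vanishes at $\varphi=\phi$, and has almost-everywhere zero derivative on the interval once one argues that the resetting status of $e$ does not change for $\xi\in(0,\alpha]$, which is exactly what bounds \eqref{eqn:alpha_resetting}--\eqref{eqn:alpha_others} guarantee. A further minor omission: you should also state explicitly that the extended $(f_i)_i$ is a valid inflow distribution, i.e.\ $\sum_i f_i=\sum_i x'_i=1$ on the new interval. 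On balance, the paper's packaging via the characterization lemma is shorter and avoids the per-arc argument entirely; your version is more explicit about \emph{why} a positive inflow rate forces the arc to be active, which is a useful piece of intuition even if it costs a little more bookkeeping.
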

  \begin{proof}
  We have $\sum_{i = 1}^\n f_i(\theta) =  \sum_{i = 1}^\n x'_i = 1$ for all $\theta \in \:(\phi, \phi + \alpha]$, which shows that $(f_i)_{i = 1}^n$ is a restricted inflow distribution. Lemma~\ref{lemma:nash_flow_characterization} yields
  $F_e^+(\l_u(\varphi)) = F_e^-(\l_v(\varphi))$ and $F_i(\varphi) = \l_{\source_i}(\varphi) \cdot r_i$ for
  all~$\varphi \in [0, \phi]$, so for $\xi \in (0, \alpha]$ it holds that
   \begin{align*}
   \!\!\!F_e^+(\l_u(\phi + \xi)) &= F_e^+(\l_u(\phi)) + x'_e / \l'_u \cdot \xi \cdot \l'_u = 
   F_e^-(\l_v(\phi)) + x'_e / \l'_v \cdot \xi \cdot \l'_v = F_e^-(\l_v(\phi + \xi)),\\
   F_i(\phi + \xi) &= F_i(\phi) + \xi \cdot x'_i = \l_{\source_i}(\phi) \cdot r_i + \xi \cdot \l'_{\source_i} \cdot r_i 
   = \l_{\source_i}(\phi + \xi) \cdot r_i.
  \end{align*}
  Again with Lemma~\ref{lemma:nash_flow_characterization} we have that the $\alpha$-extension is a restricted Nash
  flow on~$[0, \phi + \alpha]$. 
  \end{proof}
  Finally, we show that this construction leads to a Nash flow over time.
  \begin{theorem} \label{thm:finishing_nash_flow}
  There exists a Nash flow over time with multiple sources and a single sink.
  \end{theorem}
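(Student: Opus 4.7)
The plan is to construct a Nash flow over time by iteratively applying the $\alpha$-extension procedure from Theorem~\ref{thm:extension}, starting from the trivial restricted Nash flow on $[0, 0]$ (all inflow functions, labels, and queue functions identically zero). Given a restricted Nash flow over time on $[0, \phi_k]$, I would invoke Theorem~\ref{thm:existence_of_NTF} on the current shortest paths network $G'_{\phi_k}$ with resetting set $E^*_{\phi_k}$ to obtain a thin flow with resetting; the hypotheses are met because $G'_{\phi_k}$ is acyclic and every node is reachable from a source. The bounds~\eqref{eqn:alpha_resetting} and~\eqref{eqn:alpha_others} then determine a maximal admissible step length $\alpha_k$, and Theorem~\ref{thm:extension} extends the flow to $[0, \phi_{k+1}]$ with $\phi_{k+1} \coloneqq \phi_k + \alpha_k$.

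To see $\alpha_k > 0$, observe that at $\phi_k$ all of the defining inequalities hold strictly: for every resetting arc $e = uv \in E^*_{\phi_k}$ the positive queue length contributes to $\T_e(\l_u(\phi_k))$, giving $\l_v(\phi_k) > \l_u(\phi_k) + \tau_e$; for every inactive arc $e \notin E'_{\phi_k}$ we have $\l_v(\phi_k) < \l_u(\phi_k) + \tau_e$ by definition of an active arc (cf.~Lemma~\ref{lemma:queue_implies_active}). Iterating produces an increasing sequence $0 = \phi_0 < \phi_1 < \phi_2 < \cdots$ together with a compatible chain of restricted Nash flows whose union is a restricted Nash flow over time on $[0, \phi^*)$, where $\phi^* \coloneqq \sup_k \phi_k \in (0, \infty]$.

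The heart of the argument, and the principal obstacle, is to rule out the possibility that $\phi^* < \infty$, i.e., that the $\alpha_k$ accumulate to a finite limit. Assuming $\phi^* < \infty$ for contradiction, the inflow functions are uniformly bounded on every bounded interval (by $\nu_e$ and by $1$, respectively), so the functions $F^+_e$, $F^-_e$, $F_i$, $z_e$, and $\l_v$ are uniformly Lipschitz on $[0, \phi^*)$ and extend continuously to $\phi^*$ as pointwise limits. The algebraic identities characterising a Nash flow in Lemma~\ref{lemma:nash_flow_characterization}, namely $F^+_e(\l_u(\phi)) = F^-_e(\l_v(\phi))$ and $F_i(\phi) = \l_{\source_i}(\phi) \cdot r_i$, are closed and therefore inherited by the limits. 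Hence the limiting object is a restricted Nash flow over time on $[0, \phi^*]$, and the argument of the first paragraph applies once more to produce a further extension by some $\alpha^* > 0$, contradicting the maximality of $\phi^*$.

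Therefore $\phi^* = \infty$, and the union of the constructed restricted Nash flows is a Nash flow over time defined on all of $\Flow$. The unboundedness of the cumulative source inflows required by Definition~\ref{defi:Nash_flow} follows from $F_i(\phi) = \l_{\source_i}(\phi) \cdot r_i$ together with $\sum_{i} x'_i = 1$ at every step, which forces at least one of the source labels to grow and, together with reachability of every source, forces each $\l_{\source_i}$ to be unbounded along the iteration.
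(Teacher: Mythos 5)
Your proof follows the same skeleton as the paper's (iterate the $\alpha$-extension of Theorem~\ref{thm:extension}, pass to limits at accumulation points, and finally verify the unboundedness requirement), but two steps are not carried out correctly. \emph{First}, the argument that $\phi^*=\infty$ is not closed: $\phi^*$ is defined as $\sup_k\phi_k$ for one specific $\omega$-sequence, and producing a further extension past $\phi^*$ does not contradict anything about that sequence — it merely shows the construction can continue, possibly through another accumulation point, and so on. One needs to organise this properly; the paper's device is to let $\mathcal{P}_G$ be the set of all reachable right endpoints and to show it is closed both under extension and under taking sups of convergent sequences, whence it is all of $[0,\infty)$. \emph{Second, and more seriously}, the unboundedness of the cumulative source inflows is not established. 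From $\sum_i x'_i=1$ you get $\sum_i\l_{\source_i}(\phi)\,r_i=\phi$, hence the sum of the source labels diverges — so at least one label is unbounded — but this does not force each individual $\l_{\source_i}$ to be unbounded, and the appeal to ``reachability of every source'' does not bridge that gap. The paper's argument is genuinely different here: it first bounds the flow rate into the sink by $N\coloneqq\sum_{e\in\delta^-(\sink)}\nu_e$, so by FIFO particle $\phi$ cannot reach $\sink$ before time $\phi/N$ and $\l_\sink$ is unbounded; it then propagates unboundedness backward along arcs, using that each $\T_e$ is Lipschitz so $\l_u$ bounded would force $\l_v\leq\T_e(\l_u)$ bounded, and that every node reaches $\sink$. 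You would need this (or an equivalent) capacity-plus-backward-propagation argument to complete the proof.
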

  In every iteration we find a positive $\alpha$ to extend the restricted Nash flow over time. If this series has a finite
  limit it is possible to compute the function values of the limit point and extend from there. In this manner it is
  possible to show that there exists a Nash flow over time including all particles by additionally proving that the
  cumulative inflow functions have to be unbounded. \textcolor{myColor}{[A detailed proof is given in \ref{ap:thm_finishing_nash_flow}.]}
    
\section{Multiple Sinks with Demands} \label{sec:demands} In this section we consider a graph $G = (V,E)$ as before except that it can have multiple sinks~$\Sinks \coloneqq
\set{\sink_1, \dots, \sink_\m}$ and demands~$d_1, \dots, d_\m > 0$ with $d_1 + \dots + d_\m = 1$. We show how to construct a Nash
flow over time in $G$ where a share of $d_j$ of the flow has $\sink_j$ as destination.

  \subparagraph*{Sub-flow over time decomposition} 
  In the following we define a sub-flow over time, which is, intuitively, a colored proportion of a
  flow over time satisfying flow conservation. Given a flow over time $f = (f^+_e)_{e\in E}$ with queue functions~$(z_e)_{e\in E}$, we consider a family of locally integrable and bounded inflow functions $g = (g^+_e)_{e \in E}$ with
  $g^+_e(\theta) \leq f^+_e(\theta)$ for almost all~$\theta \in [0, \infty)$.
  The corresponding outflow functions are obtained by the following consideration. For a point in time
  $\vartheta \in [0, \infty)$ let $\T_e^{-1}(\vartheta)$ be all times at which a particle could enter $e$ in
  order to leave it at time~$\vartheta$. Whenever $\T_e^{-1}(\vartheta)$ is not a singleton it is a proper
  interval and by \eqref{eqn:derivative_of_lambda} we have that $f^+_e(\theta) = 0$ for almost all~$\theta \in \T_e^{-1}(\vartheta)$.
  The \emph{sub-outflow function} for arc $e \in E$ is defined~as
  \begin{equation} \label{eqn:subflow_outflow}
  g^-_e (\vartheta) \coloneqq \begin{cases}
  f^-_e (\vartheta) \cdot \frac{g^+_e(\theta)}{f^+_e(\theta)} & \text{ if } f^+_e(\theta) > 0 \text{ and }
  \T_e^{-1}(\vartheta) = \Set{\theta},\\ 0 & \text{ else.}
  \end{cases}\end{equation}
  In other words, if $g^+_e(\theta)/f^+_e(\theta) \in [0, 1]$ is the inflow share of $g$ at time $\theta$, then the outflow share of $g$ has the same value at time~$\T_e(\theta)$.
  We call $g = (g^+_e)_{e\in E}$ a \emph{sub-flow over time of $f$} if for every $v \in V\backslash \Sources$ and almost all
  $\theta \in [0, \infty)$ we have
    \begin{equation} \sum_{e\in\delta^-(v)} g^-_e(\theta) - \sum_{e\in\delta^+(v)} g^+_e(\theta) \leq
    \sum_{e\in\delta^-(v)} f^-_e(\theta) - \sum_{e\in\delta^+(v)} f^+_e(\theta).
    \label{equ:subflow_conservation}\end{equation}
    Intuitively, this means that at every non-source node~$v$ the sub-flow over time $g$ can at most ``lose'' as much flow as~$f$ does.    
Furthermore, we say $g$ \emph{conserves flow at node $v \in
  V\backslash \Sources$} if   
    $\sum_{e\in\delta^-(v)} g^-_e(\theta) - \sum_{e\in\delta^+(v)} g^+_e(\theta) = 0$ holds for almost all $\theta \in [0, \infty)$.
    Note that if $f$ conserves flow at some node $v$, then $g$ does so as well.
We say $g$ is an \emph{$\Sources$-$\sink_j$-sub-flow over time} if it conserves flow at all nodes in $V \backslash \Set{\sink_j}$.
    
  Given an inflow distribution $(f_i)_{i=1}^n$ and a number $\gamma \in [0, 1]$, a family of locally integrable functions
  $(g_i)_{i=1}^\n$ with $g_i(\phi) \leq f_i(\phi)$ is called \emph{sub-inflow distribution of value~$\gamma$} if we have $\sum_{i = 1}^\n g_i(\phi) = \gamma$ for almost all~$\phi \in \Flow$. 
  To ensure that sub-flow is
  conserved at the sources we require the net flow leaving a source $\source_i$ at time $\T_i(\phi)$ to be equal
  to the amount of flow distributed to $\source_i$ at time $\T_i(\phi)$, which is
  $r_i \cdot g_i(\phi)/f_i(\phi) = g_i(\phi)/\T'_i(\phi)$,
  whenever $f_i(\phi) > 0$ and $0$ otherwise. More precisely, we say a sub-inflow distribution \emph{matches} a sub-flow over time if for almost all $\phi \in \Flow$ and all $i = 1, \dots, \n$ we have
  \[\T'_i(\phi) \cdot \left(\sum_{e \in \delta^+(\source_i)} g_e^{+}( \T_i(\phi)) - \sum_{e \in \delta^-(\source_i)}
  g_e^{-}(\T_i(\phi))\right) =  g_i(\phi).\]
  In this case we also say that the sub-flow over time \emph{conserves flow} at~$\source_i$ and that the sub-flow over time $g$ \emph{has value $\gamma$}.
  
  \begin{definition}[Sub-flow over time decomposition]
  A family of sub-flows over time $(g_e^{j+})_{e\in E}$ and matching sub-inflow distributions $(g_i^j)_{i = 1}^{\n}$ of value $\gamma_j$, for $j = 1, \dots, \m$, is called a \emph{sub-flow over time decomposition of $f$
  with values $\gamma_1, \dots, \gamma_\m$} if $\sum_{j=1}^\m \gamma_j = 1$ and
    \[g^{1+}_e(\theta) + \dots + g^{\m+}_e(\theta) = f^+_e(\theta) \quad \text{ for all }
 e \in E \text{ and almost all }\theta \in [0, \infty).\]
    \end{definition}
    Note that \eqref{eqn:subflow_outflow} implies 
    $\sum_{j = 1}^\m g^{j-}_e(\xi) = f^-_e(\xi)$ for all $e \in E$ and almost all $\xi \in [0, \infty)$.
  
  \subparagraph*{Nash flows over time with multiple sinks and demands} These sub-flow over time decompositions allow us to formalize Nash flows over time in the setting of multiple sinks with demands. Note that for the sake of clarity we omit the $+$ and simply write $f_e$ and $g_e^j$ for the inflow functions for the remaining of this paper. 
  \begin{definition}[Nash flow over time with demands]\label{defi:Nash_flows_with_demands}
    A tuple $f = ((f_e)_{e\in E}, (f_i)_{i = 1}^\n)$ consisting of a flow over time and an inflow distribution is a
    \emph{Nash flow over time with demands $d_1, \dots, d_\m$} if it satisfies the Nash flow conditions
    (\ref{eqn:Nash_condition_source}) and (\ref{eqn:Nash_condition_active}) from Definition~\ref{defi:Nash_flow} and, furthermore,
    has a sub-flow over time decomposition $((g_e^j)_{e\in E}, (g_i^j)_{i = 1}^{\n})_{j=1}^m$, such that $(g_e^j)_{e\in
    E}$ is an $\Sources$-$\sink_j$-sub-flow over time of value $d_j$ for all~$j = 1, \dots, \m$.
  \end{definition}
  
  To construct a Nash flow over time with demands we add a super sink~$\sink$ to the graph and use a single-sink Nash flow over time as constructed in Section~\ref{sec:dynamic_routing_game}.
  For this let~$\nu_{\min} \coloneqq \min_{e\in E} \nu_e$
   and $r_{\min} \coloneqq \min_{i = 1, \dots, \n} r_i$ be the minimal capacity/inflow rate and 
  $\sigma \coloneqq \min \Set{\nu_{\min}, r_{\min}}$.
  For all $j = 1, \dots, \m$ we define $\delta_j \coloneqq \min_{\source \in \Sources} d(\source, \sink_j)$, where $d(\source, \sink_j)$ is the length of a shortest $\source$\nobreakdash-$\sink_j$\nobreakdash-path according to the transit times. Furthermore, let $\delta_{\max} \coloneqq \max_{j = 1, \dots, \m} \delta_j$ be the maximal distance to some sink $\sink_j$ from its nearest source.
  We extend $G$ by a super sink~$\sink$ and $\m$ new arcs $e_j \coloneqq (\sink_j, \sink)$ with 
  \begin{equation}\label{equ:definition_of_nu}
  \tau_{e_j} \coloneqq \delta_{\max} - \delta_j \quad \text{ and } \quad \nu_{e_j} \coloneqq 1/2 \cdot d_j \cdot \sigma.
  \end{equation}
  We denote the \emph{extended graph} by $\exG \coloneqq (\exV, \exE)$ with $\exV \coloneqq V \cup \Set{\sink}$ and $\exE \coloneqq E \cup \Set{e_1,
  \dots, e_\m}$.
    
  Note, that the new capacities are strictly smaller than all original capacities and all inflow rates and that they are proportional to the demands. Furthermore, we choose the transit times such that all new arcs are in the current shortest paths network for particle $\phi = 0$. The reason for the choice of $\sigma$ is that for every thin flow with resetting $(x',\l')$ in $G$ we have $\l'_v \leq 1/\sigma$ for all $v \in V$. \textcolor{myColor}{[See Lemma~\ref{lemma:bound_for_l'} in \ref{ap:lemma_bound_for_l'}.]}

  We obtain a Nash flow over time with demands~$f$ by using a single-sink Nash flow over time $\exf$ in $\exG$, which exists
  due to Theorem~\ref{thm:finishing_nash_flow}. To prove this we first show that if all new arcs are active for some
  particle $\phi$ then there is a static flow decomposition of the thin flow with resetting~$x'$ with $x'_{e_j} = d_j$.
  This is formalized in the following lemma, where we write $x'\big|_E$ for the restriction of $x'$ to the original
  graph $G$ and $\abs{\,\cdot\,}$ for the flow value of a static flow.
  \begin{lemma} \label{lemma:thin_flow_decomp}
    Consider a thin flow with resetting $(x', \l')$ in $\exG$ where $\set{e_1, \dots, e_{\m}} \subseteq E'$, then there
    exists a static flow decomposition $x'\big|_E = x'^1 + \dots + x'^\m$ such that each static flow~$x'^j$ conserves flow
    on all~$v \in V\backslash (\Sources \cup \Set{\sink_j})$ and $\abs{x'^j} = d_j$ for~$j = 1, \dots, \m$.
  \end{lemma}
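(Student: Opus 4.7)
The plan is to prove the stronger statement $x'_{e_j} = d_j$ for every $j$ first, and then deduce the decomposition by a routine multi-source multi-sink path argument on $x'\big|_E$, whose support is acyclic.

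For the stronger claim, I apply the thin flow conditions at the super-sink $\sink$. Since every $e_j$ is active, \eqref{eqn:l'_v_min} gives $L \coloneqq \l'_\sink = \min_{j} \rho_{e_j}(\l'_{\sink_j}, x'_{e_j})$, and by \eqref{eqn:l'_v_tight} this minimum is attained on each $e_j$ with $x'_{e_j} > 0$. In both the resetting and non-resetting cases, $\rho_{e_j}(\l'_{\sink_j}, x'_{e_j}) \geq x'_{e_j}/\nu_{e_j} = 2 x'_{e_j}/(d_j \sigma)$, so every $j$ with $x'_{e_j} > 0$ satisfies $x'_{e_j} \leq \tfrac{1}{2} L d_j \sigma$. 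Flow conservation at $\sink$ in $\exG$ yields $\sum_j x'_{e_j} = \sum_i x'_i = 1$; summing the previous bound together with $\sum_j d_j = 1$ then forces $L \geq 2/\sigma$. On the other hand, Lemma~\ref{lemma:bound_for_l'} gives $\l'_{\sink_j} \leq 1/\sigma$, and $\rho_{e_j}(\l'_{\sink_j}, 0) \leq 1/\sigma < 2/\sigma$ in both cases (it is $0$ if $e_j \in E^*$ and $\l'_{\sink_j}$ otherwise). Hence if any $x'_{e_j}$ were zero, $L$ would be at most $1/\sigma$, contradicting $L \geq 2/\sigma$. So all $x'_{e_j}$ are positive, $x'_{e_j} = \tfrac{1}{2} L d_j \sigma$ for every $j$, and the two sum conditions pin down $L = 2/\sigma$ and $x'_{e_j} = d_j$.

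Once $x'_{e_j} = d_j$ is known, $x'\big|_E$ is a static flow in $G$ supported on the acyclic subgraph $(V, E \cap E')$, with source outflows $x'_i$ summing to $1$ and sink inflows exactly $d_j$ at $\sink_j$. A standard multi-source multi-sink path decomposition then finishes the proof: adding a dummy super-source with an arc of value $x'_i$ into each $\source_i$ turns $x'\big|_E$ into a single-source acyclic flow of value $1$, which decomposes into $\Sources$-$\Sinks$-paths; defining $x'^j$ as the sum of path flows terminating at $\sink_j$ produces a static flow of value $d_j$ that conserves flow at every node other than $\Sources$ and $\sink_j$.

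The main obstacle is the first step, whose key ingredient is the choice $\nu_{e_j} = \tfrac{1}{2} d_j \sigma$: the factor $\tfrac{1}{2}$ is precisely what forces $x'_{e_j}/\nu_{e_j}$ to dominate $\l'_{\sink_j}$ inside $\rho_{e_j}$, ruling out both zero flow on some $e_j$ and any asymmetric splitting of the $d_j$. After that, the decomposition itself is a routine argument on acyclic static flows.
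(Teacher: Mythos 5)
Your proof is correct and rests on the same two key ingredients as the paper's (the bound $\l'_v \leq 1/\sigma$ from Lemma~\ref{lemma:bound_for_l'} and the deliberate choice $\nu_{e_j} = \tfrac12 d_j\sigma$), but the logical organization is genuinely different. The paper decomposes first in $\exG$ via the flow decomposition theorem, setting $x'^j$ to be the restriction of the paths through $e_j$, so that $\abs{x'^j} = x'_{e_j}$ is automatic; it then shows $x'_{e_j} = d_j$ by a \emph{contradiction} argument positing indices $a$ with $x'_{e_a} > d_a$ and $b$ with $x'_{e_b} < d_b$ and comparing $\rho_{e_a}$ and $\rho_{e_b}$ against $\l'_\sink$. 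You instead prove $x'_{e_j} = d_j$ \emph{directly} by pinning down $L = \l'_\sink = 2/\sigma$ from two sum constraints, and only afterwards perform a routine path decomposition of $x'\big|_E$ inside $G$. Your route is arguably the cleaner one: it exposes the numerical role of the constant $\tfrac12$ transparently and replaces a bilinear comparison between two indices by a single computation of $L$. Both decompositions define $x'^j$ as the aggregate of paths terminating at $\sink_j$, and both are valid.

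One small step is left implicit where it is actually used. After concluding that all $x'_{e_j}$ are positive you write $x'_{e_j} = \tfrac12 L d_j \sigma$, but from \eqref{eqn:l'_v_tight} you only have $L = \rho_{e_j}(\l'_{\sink_j}, x'_{e_j}) \geq x'_{e_j}/\nu_{e_j}$; equality requires that $\rho_{e_j}$ is attained by the $x'_{e_j}/\nu_{e_j}$ branch. This is indeed the case because $L \geq 2/\sigma > 1/\sigma \geq \l'_{\sink_j}$, so in the non-resetting case the $\max$ inside $\rho_{e_j}$ cannot be $\l'_{\sink_j}$. You observe this fact in your closing remark about the role of the factor $\tfrac12$, but it should be invoked explicitly at the point where you pass from the inequality to the equality; otherwise the step is a non sequitur on its face.
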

  The first part of this lemma can be shown with the well-known flow decomposition theorem and the statement that the
  flow values coincide with the demands follows from \eqref{eqn:l'_v_min} and \eqref{eqn:l'_v_tight} together with the
  fact that the arcs~$e_1, \dots, e_{\m}$ form an $\Sources$-$\sink$-cut. \textcolor{myColor}{[See
  \ref{ap:lemma_thin_flow_decomp}.]} In order to apply the previous lemma to all particles we show that the conditions
  are always met.
\begin{lemma} \label{lemma:new_arc_are_active}
In a Nash flow over time $\exf$ in $\exG$ the new arcs $e_1, \dots, e_{\m}$ are active for all particles $\phi \in \Flow$.
\end{lemma}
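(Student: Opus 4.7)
My plan is to argue by induction on the breakpoints of the extension scheme used in the proof of Theorem~\ref{thm:finishing_nash_flow}. For the base case $\phi = 0$, no queues have yet formed, so $\l_{\sink_j}(0) = \delta_j$; the choice $\tau_{e_j} = \delta_{\max} - \delta_j$ therefore makes $\T_{e_j}(\l_{\sink_j}(0)) = \delta_{\max}$ the same for every $j$. Hence $\l_{\sink}(0) = \delta_{\max}$ and every $e_j$ attains the minimum in the Bellman equation~\eqref{eqn:bellman}, so each $e_j$ is active at $\phi = 0$.

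For the inductive step I will assume that every $e_j$ is active for all particles in $[0,\phi]$ and show this persists through the next $\alpha$-extension. Because $\set{e_1,\dots,e_\m} \subseteq E'_\phi$, Lemma~\ref{lemma:thin_flow_decomp} applies and yields $x'_{e_j} = d_j > 0$ for every $j$ in the thin flow with resetting governing the extension. The core computation then evaluates $\rho_{e_j}(\l'_{\sink_j}, x'_{e_j})$ by cases on whether $e_j$ carries a queue: if $e_j \in E^*_\phi$, then directly $\rho_{e_j} = x'_{e_j}/\nu_{e_j} = d_j/(\frac{1}{2} d_j \sigma) = 2/\sigma$; if $e_j \notin E^*_\phi$, then $\rho_{e_j} = \max\set{\l'_{\sink_j}, 2/\sigma} = 2/\sigma$, using the bound $\l'_{\sink_j} \leq 1/\sigma$ supplied by Lemma~\ref{lemma:bound_for_l'}. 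This uniform value $\rho_{e_j} = 2/\sigma$ — the entire purpose of setting $\nu_{e_j} = \frac{1}{2} d_j \sigma$ in \eqref{equ:definition_of_nu} — together with $x'_{e_j} > 0$ and \eqref{eqn:l'_v_tight}, forces $\l'_{\sink} = 2/\sigma = \rho_{e_j}$ for every $j$ simultaneously. Lemma~\ref{lemma:extension_is_flow} then ensures that the equality $\l_{\sink}(\varphi) = \T_{e_j}(\l_{\sink_j}(\varphi))$ is maintained throughout $\varphi \in (\phi, \phi + \alpha]$, so each $e_j$ stays active up to the next breakpoint. Iterating across all breakpoints — with Lipschitz continuity of the $\l$-functions handling accumulation points — extends the conclusion to every $\phi \in \Flow$.

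The main obstacle will be organising the case split for $\rho_{e_j}$ cleanly and verifying that Lemma~\ref{lemma:bound_for_l'} is applicable in the extended graph $\exG$: that lemma bounds $\l'_v$ for $v \in V$, and each $\sink_j$ lies in $V \subseteq \exV$, which is exactly what makes the non-queue subcase go through. Without the tailored choices of $\tau_{e_j}$ and $\nu_{e_j}$ in \eqref{equ:definition_of_nu}, both the base case and this uniform calculation of $\rho_{e_j}$ would collapse and some $e_j$ could drop out of the current shortest paths network for later particles, breaking the reduction to the single-sink setting.
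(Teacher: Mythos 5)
Your argument is correct and shares the essential ingredients with the paper's proof (the base case computation $\l_{\sink}(0) = \delta_{\max}$, Lemma~\ref{lemma:thin_flow_decomp} to obtain $x'_{e_j} = d_j > 0$, and the bound from Lemma~\ref{lemma:bound_for_l'}), but it closes by a different mechanism. You argue inductively over the breakpoints of the extension scheme: once $\set{e_1, \dots, e_\m} \subseteq E'_\phi$, the thin flow has $x'_{e_j} > 0$, so~\eqref{eqn:l'_v_tight} forces $\l'_{\sink} = \rho_{e_j}(\l'_{\sink_j}, x'_{e_j})$, and the proof of Lemma~\ref{lemma:extension_is_flow} then maintains $\l_{\sink}(\varphi) = \T_{e_j}(\l_{\sink_j}(\varphi))$ through the next $\alpha$-extension. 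The paper instead establishes a pointwise strictly stronger fact: the same inequality $x'_{e_j}/\nu_{e_j} = 2/\sigma > 1/\sigma \geq \l'_{\sink_j}$, rewritten as $f^+_{e_j}(\l_{\sink_j}(\phi)) = x'_{e_j}/\l'_{\sink_j} > \nu_{e_j}$, gives $z'_{e_j} > 0$, so the queues on the new arcs are strictly increasing; taking $\phi_0$ the infimum of bad particles, the queue is positive at $\phi_0$ and hence slightly beyond, so $e_j \in E^*_\phi \subseteq E'_\phi$ by Lemma~\ref{lemma:queue_implies_active}, a contradiction. The benefit of the paper's route is that it handles accumulation points of breakpoints automatically, since strict queue growth passes through limits; your invocation of Lipschitz continuity is on the right track but merits one explicit sentence, namely that $\{\phi \in \Flow : \set{e_1, \dots, e_\m} \subseteq E'_\phi\}$ is closed (activeness is an equality of continuous functions) and right-open (your inductive step), hence all of $\Flow$ by connectedness. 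A small overstatement on your side: your inductive step does not actually need the numerical value $\rho_{e_j} = 2/\sigma$ --- once $x'_{e_j} > 0$, condition~\eqref{eqn:l'_v_tight} together with Lemma~\ref{lemma:extension_is_flow} already does the work --- although the choice $\nu_{e_j} = d_j \sigma / 2$ is of course essential upstream in Lemma~\ref{lemma:thin_flow_decomp}.
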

The key proof idea is the following. At the beginning $e_1, \dots, e_{\m}$ are active by the choice of their transit times. Later on the queues of all new arcs always increase since the capacities are sufficiently small. In general, a positive queue on an arc implies that it is active. \textcolor{myColor}{[See~\ref{ap:lemma_new_arc_are_active}.]}

By means of the previous lemmas we can finally prove that the Nash flow over time in $\exG$ induces a Nash flow over
time with demands in $G$.  \textcolor{myColor}{[See \ref{ap:thm_multi_sink}.]}
\begin{theorem} \label{thm:multi_sink}
  Let $\exf$ be a $\Sources$-$\sink$-Nash flow over time in~$\exG$. The flow over time $f \coloneqq \exf \big|_E$ on the
  original network together with the inflow distribution of $\exf$ is a Nash flow over time with demands~$d_1, \dots,
  d_\m$.
\end{theorem}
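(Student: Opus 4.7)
The plan is to reduce the claim to the constructions of Section~\ref{sec:dynamic_routing_game} together with the structural Lemmas~\ref{lemma:thin_flow_decomp} and~\ref{lemma:new_arc_are_active}. First I would argue that $f = \exf|_E$ satisfies the two Nash flow conditions directly. Every new arc $e_j$ has tail $\sink_j \in V$ and head $\sink \notin V$, so deleting the arcs $e_1,\dots,e_\m$ does not affect flow conservation at any node of $V \setminus \Sinks$; at the former sinks $\sink_j$ flow conservation was only required in $\exG$ and is no longer required in $G$. No new arc shortens any $\Sources$-$v$-path for $v \in V$, so the earliest arrival time $\l_v$ and the set of active arcs $E'_\phi \cap E$ coincide in $\exG$ and in $G$, and the Nash flow conditions~\eqref{eqn:Nash_condition_source} and~\eqref{eqn:Nash_condition_active} transfer immediately from $\exf$ to $f$.

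To build the sub-flow over time decomposition, I would exploit the phase structure of $\exf$ produced by Theorem~\ref{thm:finishing_nash_flow}. In every phase the derivatives of $\exf$ form a thin flow with resetting $(x', \l')$ on the current shortest paths network of $\exG$, and Lemma~\ref{lemma:new_arc_are_active} guarantees that the hypothesis $\{e_1,\dots,e_\m\} \subseteq E'$ of Lemma~\ref{lemma:thin_flow_decomp} is met. Applying that lemma yields a static decomposition $x'\big|_E = x'^1 + \dots + x'^\m$ in which each $x'^j$ is an $\Sources$-$\sink_j$-flow of value $d_j$. I would lift this phase-local decomposition to the time axis just as in Section~\ref{sec:dynamic_routing_game}, setting $g_i^j(\varphi) \coloneqq x'^j_i$ on the particle interval of the phase and $g_e^{j+}(\theta) \coloneqq x'^j_e/\l'_u$ on the corresponding time interval, while defining the sub-outflow via~\eqref{eqn:subflow_outflow}; Lemma~\ref{lemma:outflow_of_extension} then gives $g_e^{j-}(\vartheta) = x'^j_e/\l'_v$.

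It remains to verify the sub-flow over time properties. The identities $\sum_{j} g_e^{j+} = f_e^+$, $\sum_{j} g_i^j = f_i$, and $\sum_{i} g_i^j = \abs{x'^j} = d_j$ are immediate from the static decomposition. For an interior node $v \in V \setminus (\Sources \cup \{\sink_j\})$, writing out the sub-flow conservation equation at a time $\theta = \l_v(\varphi)$ inside the phase reduces it to $\frac{1}{\l'_v}\bigl(\sum_{e\in\delta^-(v)} x'^j_e - \sum_{e\in\delta^+(v)} x'^j_e\bigr) = 0$, which holds by Lemma~\ref{lemma:thin_flow_decomp}. An analogous computation at each source $\source_i$, using \eqref{eqn:l'_s} to bridge between the particle and time scales, yields the matching condition.

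The main obstacle I foresee is the bookkeeping needed to glue the phase-wise constructions into globally well-defined, locally integrable sub-flows over time and to handle the countably many phase boundaries. This should be resolved by the uniform bound on the $\l'_v$ from Lemma~\ref{lemma:bound_for_l'}, which, together with the local boundedness of $\exf$, forces each $g_e^{j+}$ to be locally bounded and hence locally integrable; the phase boundaries form a null set in both the particle and time domains and so do not interfere with the almost-everywhere flow conservation required of a sub-flow over time.
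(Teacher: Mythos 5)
Your proposal is correct and follows essentially the same route as the paper: invoke Lemma~\ref{lemma:new_arc_are_active} to get the hypotheses of Lemma~\ref{lemma:thin_flow_decomp} in every phase, lift the phase-wise static decomposition $x'\big|_E = x'^1 + \dots + x'^\m$ to the time axis exactly as you describe, and observe that the Nash conditions on $f$ are inherited from $\exf$. One small item you omit that the paper verifies explicitly: the sub-flow inequality~\eqref{equ:subflow_conservation} must also hold at the node $\sink_j$ itself (which is in $V\backslash\Sources$ but not covered by the conservation equalities you check at interior nodes and sources); there $g^j$ and $f$ both have a net deficit equal to $\exf_{e_j}(\theta)$, so~\eqref{equ:subflow_conservation} holds with equality. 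Also note that the $x'^j_i$ in your definition of $g_i^j$ is not given directly by Lemma~\ref{lemma:thin_flow_decomp} (which decomposes only the arc flows) and should be read as the net outflow $\sum_{e\in\delta^+(\source_i)} x'^j_e - \sum_{e\in\delta^-(\source_i)} x'^j_e$, which is how the paper defines it.
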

The sub-flow over time decomposition is obtained applying Lemma~\ref{lemma:thin_flow_decomp} and defining
\[g^j_e(\theta) \coloneqq \frac{x'^j_e}{\l'_u} \qquad \text{ and } \qquad
    g^j_i(\phi) \coloneqq \sum_{e\in \delta^+(\source_i)} x'^j_e - \sum_{e \in \delta^-(\source_i)} x'^j_e\]
   in every thin flow phase. That this forms a sub-flow over time decomposition, and therefore $f$ is a Nash
   flow over time with demands $d_1, \dots, d_{\m}$ can then be shown straightforwardly.
   
\section{Conclusion and Outlook} We showed that the Nash flow over time introduced in \cite{Koch_2009} can be extended
to our multi-terminal setting, for which we uncoupled the flow particles from their entering times and introduced inflow
distributions instead. Furthermore, the proper definition of a sub-flow-structure and a super-sink-construction allowed
us to have Nash flows over time with multiple sinks and demands. Nonetheless the much more challenging question about the
structure of a dynamic equilibrium in a setting with multiple origin-destination-pairs remains open. There are also
further interesting aspects that are unsolved in the original setting, such as the computational complexity of thin flows
with resetting or the question if the number of thin flow phases is finite within a Nash flow over time. Last but not
least the very interesting question if the price of anarchy is bounded or not remains open, despite some promising progress in recent time.
   
\newpage

  \bibliography{literature}

\newpage
\appendix
  \section{Additional Lemmas} 
  \subsection{Cumulative flows and queues} \label{ap:lemma_in_out_flow}
   \begin{lemma} \label{lemma:in_out_flow}
     For a given arc $e = uv \in E$ the following is true for all times $\theta \geq 0$:
     \begin{enumerate}
       \item $z_e(\theta) = F_e^+(\theta-\tau_e) - F_e^-(\theta)$ \label{item:queue}
       \item $F_e^+(\theta) = F_e^-(\T_e(\theta))$ \label{item:cumulative_in_equals_out}
      \end{enumerate}
    \end{lemma}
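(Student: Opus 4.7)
For part~(\ref{item:queue}) I would show that both sides are Lipschitz continuous functions of $\theta$ that agree at $\theta=0$ and share the same derivative almost everywhere. At $\theta=0$ the left-hand side equals $z_e(0)=0$ by the initial condition stated just before~\eqref{eqn:z'_definition}; the right-hand side also vanishes, since $f^+_e$ is zero on the negative reals by convention and, together with~\eqref{eqn:z'_definition}, this forces $z_e\equiv 0$ on $[0,\tau_e]$, whence~\eqref{eqn:defi_outflow} gives $f^-_e\equiv 0$ on $[0,\tau_e]$ and so $F^-_e(0)=0$. Differentiating the right-hand side yields $f^+_e(\theta-\tau_e)-f^-_e(\theta)$, and a direct case split on whether $z_e(\theta)>0$ or $z_e(\theta)=0$ (with sub-cases $f^+_e(\theta-\tau_e)\geq\nu_e$ or $f^+_e(\theta-\tau_e)<\nu_e$ in the latter) identifies this difference with $z'_e(\theta)$ by combining~\eqref{eqn:z'_definition} and~\eqref{eqn:defi_outflow}.

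For part~(\ref{item:cumulative_in_equals_out}) the plan is to bootstrap from part~(\ref{item:queue}). Applying~(\ref{item:queue}) at $\vartheta=\theta+\tau_e$ yields $F^+_e(\theta) = z_e(\theta+\tau_e) + F^-_e(\theta+\tau_e)$, so using $\T_e(\theta)=\theta+\tau_e+z_e(\theta+\tau_e)/\nu_e$ the target identity $F^+_e(\theta)=F^-_e(\T_e(\theta))$ becomes
\[
\int_{\theta+\tau_e}^{\T_e(\theta)} f^-_e(\xi)\,\diff\xi \;=\; z_e(\theta+\tau_e).
\]
If $z_e(\theta+\tau_e)=0$ the interval collapses and both sides vanish. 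Otherwise, the key observation is that $z_e$ remains strictly positive throughout $[\theta+\tau_e,\T_e(\theta))$: since $z'_e\geq-\nu_e$ by the remark after~\eqref{eqn:defi_outflow}, one has $z_e(\xi)\geq z_e(\theta+\tau_e)-\nu_e(\xi-\theta-\tau_e)$, which is strictly positive exactly for $\xi<\T_e(\theta)$. On this interval~\eqref{eqn:defi_outflow} therefore forces $f^-_e(\xi)=\nu_e$, and the integral evaluates to $\nu_e\cdot(\T_e(\theta)-\theta-\tau_e)=z_e(\theta+\tau_e)$, as needed.

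\textbf{Main obstacle.} The only mild subtlety is the bookkeeping in the case analysis of part~(\ref{item:queue}) and, in part~(\ref{item:cumulative_in_equals_out}), cleanly justifying why the queue cannot drain before the tagged particle exits, which is precisely what makes the FIFO picture consistent. Both points rest on the Lipschitz bound $z'_e\geq-\nu_e$; once that is in hand, Rademacher's theorem lets one move freely between these Lipschitz functions and their almost-everywhere derivatives to conclude.
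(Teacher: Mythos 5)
Your proof is correct and takes essentially the same approach as the paper's: your part~(ii) is identical (reduce to showing the queue stays positive on $[\theta+\tau_e,\T_e(\theta))$ so that $f^-_e\equiv\nu_e$ there), and your part~(i), matching the a.e.\ derivatives via a case split on whether $z_e(\theta)>0$ and whether $f^+_e(\theta-\tau_e)\gtrless\nu_e$, is just the differentiated form of the paper's partition of $[0,\theta]$ into the three sets $\Theta_0,\Theta_1,\Theta_2$ followed by integration. The only cosmetic difference is that your verification of $F^-_e(0)=0$ via the claim $z_e\equiv 0$ on $[0,\tau_e]$ is unnecessary, since $F^-_e(0)=\int_0^0 f^-_e\,\diff\xi=0$ holds trivially by definition.
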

  
\begin{proof}
  We split the interval of entrance times $[0, \theta]$ into three subsets 
  \begin{align*}    
    \Theta_0 &\coloneqq \Set{\vartheta \in [0, \theta] | z_e(\vartheta) > 0} \\ 
    \Theta_1 &\coloneqq \Set{\vartheta \in [0, \theta] | z_e(\vartheta) = 0 \text{ and } f^+(\vartheta-\tau_e) > \nu_e}\\
    \Theta_2 &\coloneqq \Set{\vartheta \in [0, \theta] | z_e(\vartheta) = 0 \text{ and } f^+(\vartheta-\tau_e) \leq \nu_e}.
  \end{align*}
  By evolution of the queues~\eqref{eqn:z'_definition} and the definition of the outflow function~\eqref{eqn:defi_outflow} we obtain
  \begin{align*}z_e(\theta)
  &= \int_{\Theta_0 \cup \Theta_1} f_e^+(\vartheta-\tau_e) - \nu_e \diff \vartheta + \int_{\Theta_2} 0 \diff \vartheta \\
  &= \int_{\Theta_0 \cup \Theta_1} f_e^+(\vartheta-\tau_e)- f_e^-(\vartheta) \diff \vartheta +
  \int_{\Theta_2} f_e^+(\vartheta-\tau_e)  -  \underbrace{f_e^+(\vartheta-\tau_e)}_{=f_e^-(\vartheta)} \diff \vartheta \\
  &=F_e^+(\theta-\tau_e) - F_e^-(\theta).
  \end{align*}
  This shows~\ref{item:queue}.
  Since the continuous function $z_e$ cannot decrease faster than with slope $-\nu_e$ we obtain that if $z_e(\theta +
  \tau_e) > 0$ then it is positive during the interval $[\theta + \tau_e, \theta + \tau_e + z_e(\theta) / \nu_e)$ and \eqref{eqn:defi_outflow} yields that $f_e^-(\vartheta) = \nu_e$ for all~$\vartheta \in [\theta + \tau_e,
  \T_e(\theta))$. We obtain
  \begin{align*}
  F_e^-(\T_e(\theta)) &= F_e^-(\theta + \tau_e) + \int_{\theta + \tau_e}^{\T_e(\theta)}
  \underbrace{f_e^-(\vartheta)}_{= \nu_e} \diff \vartheta \\
  &\stackrel{\ref{item:queue}}{=} F_e^+(\theta) \underbrace{- z_e(\theta + \tau_e) + \nu_e \cdot
  (\T_e(\theta) - \theta - \tau_e)}_{= 0} = F_e^+(\theta),
  \end{align*}
  which shows~\ref{item:cumulative_in_equals_out}.           
\end{proof}

  \subsection{Characterization of active and resetting arcs} \label{ap:lemma_queue_implies_active}
  This lemma shows, among other facts, that every arc with a positive queue has to be active.
    \begin{lemma} \label{lemma:queue_implies_active}
      Consider a Nash flow over time $f$ with earliest arrival times~$(\l_v)_{v\in V}$. For every particle $\phi \in \Flow$, the following statements are true:
      \begin{enumerate}
      \item $E^*_\phi \subseteq E'_\phi$ \label{it:resetting_subset_active}
      \item $E'_\phi = \Set{e = uv \in E | \l_v(\phi) \geq \l_u(\phi) + \tau_e}$ \label{it:active_characterization} 
      \item $E^*_\phi = \Set{e = uv \in E | \l_v(\phi) > \l_u(\phi) + \tau_e}$  \label{it:resetting_characterization}
      \item The graph $G'_\phi = (V, E'_\phi)$ is acyclic and every node is reachable by a source. \label{it:current_paths_are_acyclic}
      \end{enumerate}
    \end{lemma}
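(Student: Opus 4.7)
The plan is to prove the four statements in the order (i), (iii), (ii), (iv), since (iii) and (ii) depend on (i), and (iv) is largely independent. The only delicate step is (i); the rest follow by clean bookkeeping using the Bellman inequality and the Nash flow characterization of Lemma~\ref{lemma:nash_flow_characterization} together with Lemma~\ref{lemma:in_out_flow}.

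\textbf{Step 1 (proof of (i), the main obstacle).} Fix $e=uv \in E^*_\phi$, so $z_e(\l_u(\phi)+\tau_e) > 0$. Because $z_e$ can decrease at most with slope $-\nu_e$, the queue stays strictly positive throughout $[\l_u(\phi)+\tau_e, \T_e(\l_u(\phi)))$, and therefore by~\eqref{eqn:defi_outflow} the outflow rate equals $\nu_e$ on this interval; in particular $F_e^-$ strictly increases on it. Now combine the two identities $F_e^+(\l_u(\phi)) = F_e^-(\l_v(\phi))$ (Lemma~\ref{lemma:nash_flow_characterization}) and $F_e^+(\l_u(\phi)) = F_e^-(\T_e(\l_u(\phi)))$ (Lemma~\ref{lemma:in_out_flow}.\ref{item:cumulative_in_equals_out}) to obtain $F_e^-(\l_v(\phi)) = F_e^-(\T_e(\l_u(\phi)))$. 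Since Bellman gives $\l_v(\phi) \leq \T_e(\l_u(\phi))$, monotonicity of $F_e^-$ forces it to be constant on $[\l_v(\phi), \T_e(\l_u(\phi))]$. A small case distinction on whether $\l_v(\phi)$ lies before or not before $\l_u(\phi)+\tau_e$ shows that this interval overlaps $[\l_u(\phi)+\tau_e,\T_e(\l_u(\phi)))$ in a set of positive measure whenever $\l_v(\phi) < \T_e(\l_u(\phi))$, contradicting the strict monotonicity above. Hence $\l_v(\phi) = \T_e(\l_u(\phi))$, so $e \in E'_\phi$.

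\textbf{Step 2 (proof of (iii)).} For the forward implication, take $e \in E^*_\phi$; by (i) it is active, so $\l_v(\phi) = \T_e(\l_u(\phi)) = \l_u(\phi) + \tau_e + z_e(\l_u(\phi)+\tau_e)/\nu_e > \l_u(\phi)+\tau_e$. Conversely, if $\l_v(\phi) > \l_u(\phi)+\tau_e$, Bellman yields $z_e(\l_u(\phi)+\tau_e) \geq \nu_e(\l_v(\phi) - \l_u(\phi) - \tau_e) > 0$.

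\textbf{Step 3 (proof of (ii)).} The inclusion $\subseteq$ is immediate from $\T_e(\l_u(\phi)) \geq \l_u(\phi)+\tau_e$. For $\supseteq$, assume $\l_v(\phi) \geq \l_u(\phi)+\tau_e$. If the inequality is strict, (iii) gives $e \in E^*_\phi \subseteq E'_\phi$ by (i). If equality holds, (iii) contrapositively gives $z_e(\l_u(\phi)+\tau_e) = 0$, whence $\T_e(\l_u(\phi)) = \l_u(\phi)+\tau_e = \l_v(\phi)$, so $e$ is active.

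\textbf{Step 4 (proof of (iv)).} Any directed cycle $v_1 \to v_2 \to \dots \to v_k \to v_1$ in $G'_\phi$ with arcs $e_i$ would give $\l_{v_{i+1}}(\phi) \geq \l_{v_i}(\phi) + \tau_{e_i}$ by (ii); summing around the cycle yields $0 \geq \sum_i \tau_{e_i}$, contradicting the standing assumption that cycle transit times are positive. For reachability from a source, observe that Bellman's equations~\eqref{eqn:bellman} imply that every $v \in V\setminus\Sources$ has at least one active incoming arc, and every source either has an active incoming arc or realizes its label through $\T_i(\phi)$. Starting from any $v$ and iteratively moving to an active predecessor produces a walk in $G'_\phi$ that, by acyclicity and finiteness of $V$, must terminate; termination can only happen at a source.
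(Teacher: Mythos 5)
Your proof is correct, and for part (i), the crucial step, you take a genuinely different route from the paper. The paper constructs $\phi' \coloneqq \max\{\varphi \leq \phi : e \in E'_\varphi\}$, invokes the Nash condition~\eqref{eqn:Nash_condition_active} to argue the inflow rate vanishes on $(\l_u(\phi'),\l_u(\phi)]$ so the queue cannot grow there, then uses the derivative formula~\eqref{eqn:derivative_of_lambda} to conclude $\T_e$ is constant on that interval, bridging $\l_v(\phi') = \l_v(\phi)$. You instead stay entirely at the level of cumulative quantities: combining the characterization $F_e^+(\l_u(\phi)) = F_e^-(\l_v(\phi))$ from Lemma~\ref{lemma:nash_flow_characterization} with the FIFO identity $F_e^+(\l_u(\phi)) = F_e^-(\T_e(\l_u(\phi)))$ from Lemma~\ref{lemma:in_out_flow} forces $F_e^-$ to be constant on $[\l_v(\phi),\T_e(\l_u(\phi))]$, and this contradicts the fact that $F_e^-$ increases at rate $\nu_e$ throughout $[\l_u(\phi)+\tau_e,\T_e(\l_u(\phi)))$, where the queue remains positive. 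Your route buys two things: it sidesteps the need to justify that the $\max$ is attained (the paper only gestures at this), and it does not depend on~\eqref{eqn:derivative_of_lambda}, which the paper only derives later in the proof of Theorem~\ref{thm:l'_equations}. Parts (ii) and (iii) follow the same bookkeeping as the paper. For acyclicity in (iv) you sum the inequality from (ii) around a hypothetical cycle, whereas the paper locates a single arc with $\l_u(\phi) > \l_v(\phi)$; these are essentially equivalent, your version being a touch more direct. The reachability argument is the same greedy backward walk in both.
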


  \begin{proof} 
  Recall that $e \in E^*_\phi \quad \Leftrightarrow \quad z_e(\l_u(\phi) + \tau_e) > 0$.
  \begin{enumerate}
    \item Let $e = uv \in E^*_\phi$ and $\phi' \coloneqq \max \Set{\varphi \leq \phi| e \in E'_\varphi}$, which is
    well-defined since $F_e^+(\phi) > 0$ implies for a Nash flow over time that $e$ has been active for a set with
    positive measure within $[0, \phi]$. Since $f$ is a Nash flow over time we have $f_e^+(\theta) = 0$ for almost
    all~$\theta \in (\l_u(\phi'), \l_u(\phi)]$, which implies that the queue cannot increase between
    $\l_u(\phi')+\tau_e$ and $\l_u(\phi) + \tau_e$.  Hence, $z_e(\l_u(\phi) + \tau_e) > 0$ yields that $z_e(\theta) > 0$
    for all $\theta \in (\l_u(\phi')+\tau_e, \l_u(\phi) + \tau_e]$. It follows that the travel time $\T_e$ is constant
    within $(\l_u(\phi'), \l_u(\phi)]$ because~\eqref{eqn:derivative_of_lambda} shows that~$\T'_e(\theta) = 0$. Together with
    the fact that $e$ is active for $\phi'$ and $\l_v$ is increasing we obtain
    \[\l_v(\phi) \stackrel{\eqref{eqn:bellman}}{\leq} \T_e(\l_u(\phi)) = \T_e(\l_u(\phi')) = \l_v(\phi') \leq \l_v(\phi).\]    
    Hence $e \in E'_\phi$, which shows~$E^*_\phi \subseteq E'_\phi$.

    \item Suppose~$z_e(\l_u(\phi) + \tau_e) > 0$. Then by \ref{it:resetting_subset_active} we have $e \in E^*_\phi \subseteq E'_\phi$, and therefore~$\l_v(\phi) = \T_e(\l_u(\phi)) > \l_u(\phi) + \tau_e$. Hence, $e$ is contained in both sides.
    For $z_e(\l_u(\phi)+\tau_e) = 0$ we obtain
    \begin{align*}e \in E'_\phi \;\Leftrightarrow\; \l_v(\phi) = \T_e(\l_u(\phi)) \;&\Leftrightarrow\; 
    \l_v(\phi) = \l_u(\phi) + \tau_e + \underbrace{z_e(\l_u(\phi)+\tau_e)/\nu_e}_{= 0} \\
    \;&\Leftrightarrow\; 
    \l_v(\phi) = \l_u(\phi) + \tau_e.\end{align*}
    This proves~$(ii)$.
    
    \item In the case that $e \not \in E'_\phi$ the previous results \ref{it:resetting_subset_active} and \ref{it:active_characterization} imply that $e$ is neither on the left nor on the right hand side.
    In the case of $e \in E'_\phi$ we conclude from the definition of $E^*_\phi$ and \ref{it:resetting_subset_active} that
    \begin{align*}
    e \in E^*_\phi \;\Leftrightarrow\; z_e(\l_u(\phi)+\tau_e) > 0 \;&\Leftrightarrow\; \underbrace{\l_u(\phi) + \tau_e + \frac{z_e(\l_u(\phi)+\tau_e)}{\nu_e}}_{= \l_v(\phi)} > \l_u(\phi) + \tau_e \\
    \;&\Leftrightarrow\; \l_v(\phi) > \l_u(\phi) + \tau_e.
    \end{align*}
    This shows~$(iii)$.
    
    \item Suppose there is a directed cycle of active arcs in $G'_\phi$. Since the sum of all transit times in every cycle is
    positive, it follows that not all $\l$-labels on the cycle can have the same value. So there has to be at least one
    arc $e = uv$ on the cycle with $\l_u(\phi) > \l_v(\phi)$, and therefore $\T_e(\l_u(\phi)) > \l_v(\phi)$, which is a
    contradiction since $e$ is active. Hence, $G'_\phi$ is acyclic.
    By the definition of the earliest arrival times every non-source node has an incoming active arc. Starting at $v$
    going backwards these arc shows that every node is reachable by a source, which finishes the proof.
    \end{enumerate}
  \end{proof}
  
  \subsection{Differentiation rule for a minimum}
     \begin{lemma}\label{lem:diff_rule_for_min}
     Let $E$ be a finite set and for every $e \in E$ let $\T_e \colon \R_{\geq 0} \rightarrow \R$ be a function that is differentiable almost everywhere. If we set $\l(\theta) := \min_{e \in E} \T_e(\theta)$ for all $\theta \geq 0$ it follows that
     \begin{equation} \label{eqn:diff_rule_for_min}
     \l'(\theta) = \min_{e \in E'_\theta} \T_e'(\theta)
     \end{equation}
     for almost all $\theta \geq 0$ where $E'_\theta := \set{e \in E| g(\theta) = g_e(\theta)}$.
     \end{lemma}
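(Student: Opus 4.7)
The plan is to reduce the assertion to a pointwise statement at every $\theta_0$ at which each $\T_e$ is differentiable---a full-measure set, since $E$ is finite and each $\T_e$ is differentiable almost everywhere. At such a $\theta_0$ differentiability implies continuity of each $\T_e$ at $\theta_0$, hence also continuity of $\l = \min_{e \in E} \T_e$ at $\theta_0$. Abbreviate $m := \min_{e \in E'_{\theta_0}} \T_e'(\theta_0)$. I would show the right derivative of $\l$ at $\theta_0$ exists and equals $m$; the left-hand case is entirely symmetric.

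For the upper bound I would pick any $e^* \in E'_{\theta_0}$ and use the global inequality $\l \leq \T_{e^*}$ together with the equality $\l(\theta_0) = \T_{e^*}(\theta_0)$ to obtain, for $h > 0$, the estimate $(\l(\theta_0+h) - \l(\theta_0))/h \leq (\T_{e^*}(\theta_0+h) - \T_{e^*}(\theta_0))/h$. Sending $h \downarrow 0$ bounds $\limsup_{h \downarrow 0}$ of the left-hand quotient by $\T_{e^*}'(\theta_0)$; minimising over $e^* \in E'_{\theta_0}$ yields $\limsup \leq m$.

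For the matching lower bound I would argue by contradiction, exploiting the finiteness of $E$. Suppose the liminf of the difference quotient is strictly less than $m$. Then one can pick a sequence $h_n \downarrow 0$ along which the quotient stays below some value $< m$. For each $n$ select $e_n \in E$ attaining $\l(\theta_0 + h_n) = \T_{e_n}(\theta_0 + h_n)$, and pass to a subsequence on which $e_n$ equals some fixed $e_0$. Continuity of $\T_{e_0}$ and of $\l$ at $\theta_0$ then forces $\T_{e_0}(\theta_0) = \l(\theta_0)$, i.e.\ $e_0 \in E'_{\theta_0}$, and along that subsequence the quotient rewrites as $(\T_{e_0}(\theta_0+h_n) - \T_{e_0}(\theta_0))/h_n$, which tends to $\T_{e_0}'(\theta_0) \geq m$---contradicting the standing strict upper bound.

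Combining the two one-sided bounds shows the right derivative of $\l$ at $\theta_0$ exists and equals $m$; repeating the argument with $h<0$ handles the left derivative. The main delicate point is the lower-bound step: one has to extract the index $e_0$ from the finite set $E$ rather than from the possibly shifting active sets $E'_{\theta_0 + h_n}$, and to use continuity (which here comes for free from pointwise differentiability) of $\T_{e_0}$ and $\l$ at $\theta_0$ to conclude $e_0 \in E'_{\theta_0}$; once this membership is secured everything else is routine.
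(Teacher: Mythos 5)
Your upper-bound and contradiction arguments for the right derivative are fine, and they are essentially a hands-on rephrasing of the paper's observation that for small $\xi>0$ only the arcs in $E'_{\theta_0}$ can realize the minimum, so that $\l(\theta_0+\xi)=\min_{e\in E'_{\theta_0}}\T_e(\theta_0+\xi)$ and the $\min$ and $\lim$ can be interchanged over the finite set $E'_{\theta_0}$. But there is a genuine gap in the final step: the left-hand case is \emph{not} ``entirely symmetric.'' Run your own argument with $h<0$: from $\l\le \T_{e^*}$ and $\l(\theta_0)=\T_{e^*}(\theta_0)$ you get $\l(\theta_0+h)-\l(\theta_0)\le \T_{e^*}(\theta_0+h)-\T_{e^*}(\theta_0)$, and \emph{dividing by a negative $h$ reverses the inequality}. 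Your upper-bound step now gives a \emph{lower} bound on the left difference quotient, and when you minimise\slash maximise over $e^*\in E'_{\theta_0}$ the roles swap. The argument you describe, carried out faithfully for $h<0$, yields that the left derivative of $\l$ at $\theta_0$ equals $\max_{e\in E'_{\theta_0}}\T_e'(\theta_0)$, not the minimum. So at a point $\theta_0$ where $E'_{\theta_0}$ contains two arcs whose derivatives differ, $\l$ is in fact not differentiable, and your pointwise claim is false there (think of $\T_1(\theta)=\theta$, $\T_2(\theta)=-\theta$ at $\theta_0=0$). You have only established that a right and a left derivative exist, not that they coincide.

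To close the gap you must show that the exceptional set (points where two minimizers have distinct derivatives) has measure zero, so that $\max=\min$ for almost every $\theta$. One route: for each pair $e\neq e'$, the function $\T_e-\T_{e'}$ vanishes on $\{\theta:\T_e(\theta)=\T_{e'}(\theta)\}$, and on any level set of an a.e.-differentiable (Lipschitz, in the paper's use) function, the derivative vanishes at almost every density point of that set, so $\T_e'(\theta)=\T_{e'}'(\theta)$ for a.e.\ $\theta$ in that set; a finite union over pairs then handles all of $E$. Alternatively, do what the paper does: restrict attention from the outset to the full-measure set of $\theta$ at which $\l$ itself is differentiable (legitimate here since the $\T_e$ in the application are Lipschitz, hence so is $\l$), and then only the one-sided computation you carried out is needed to pin down the value of $\l'(\theta_0)$.
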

     
     \begin{proof}
     Let $\theta \geq 0$ such that $\l$ and all $\T_e$, $e \in E$, are differentiable, which is almost everywhere.
     The functions $\T_e$ are continuous at $\theta$ which gives us for sufficiently small $\varepsilon > 0$ that $\l(\theta + \xi) = \min_{e \in E'_\theta} \T_e(\theta + \xi)$ for all $\xi \in [\theta, \theta + \varepsilon]$. Hence,
     \begin{align*}
     \l'(\theta) &= \lim_{\xi \downarrow 0} \frac{\T(\theta + \xi) - \T(\theta)}{\xi}\\
     &= \lim_{\xi \downarrow 0}  \min_{e \in E'_\theta} \frac{\T_e(\theta + \xi) - \T(\theta)}{\xi}\\
     &= \min_{e \in E'_\theta} \lim_{\xi \downarrow 0} \frac{\T_e(\theta + \xi) - \T_e(\theta)}{\xi}\\
     &= \min_{e \in E'_\theta} \T'_e(\theta).
     \end{align*}
     \end{proof}   
  
   \subsection{Extended outflow function} \label{ap:lemma_outflow_of_extension}
    \begin{lemma} \label{lemma:outflow_of_extension}
    Let $((f^+_e)_{e\in E}, (f_i)_{i = 1}^\n)$ be a restricted Nash flow over time on $[0, \phi]$ and let $\alpha > 0$
    satisfy~\eqref{eqn:alpha_resetting} and~\eqref{eqn:alpha_others}. Then the outflow functions of the
    $\alpha$-extension satisfy
    $f_e^-(\theta) = \frac{x'_e}{\l'_v}$
    for all $e = uv \in E$ and almost all $\theta \in \;(\l_v(\phi), \l_v(\phi + \alpha)]$ and $f_e^-(\theta) = 0$ for~$\theta > \l_v(\phi + \alpha)$.
    \end{lemma}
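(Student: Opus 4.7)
The plan is to exploit the FIFO identity $F^+_e(\theta) = F^-_e(\T_e(\theta))$ from Lemma~\ref{lemma:in_out_flow}, together with the fact that during the $\alpha$-extension the cumulative inflow $F_e^+$ grows affinely in the particle parameter. I would split the argument into cases according to whether the arc $e = uv$ is active at $\phi$ and whether $x'_e > 0$; note that whenever $\l'_v = 0$ the output interval $(\l_v(\phi), \l_v(\phi+\alpha)]$ is empty and the claim is vacuous, so I may assume $\l'_v > 0$ in what follows.

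For the main case $e \in E'_\phi$ with $x'_e > 0$, I would first invoke \eqref{eqn:l'_v_tight} to get $\l'_v = \rho_e(\l'_u, x'_e)$, and combine this with the Bellman equations for the extended labels from Lemma~\ref{lemma:extension_is_flow} to conclude that $\l_v(\phi+\xi) = \T_e(\l_u(\phi+\xi))$ for every $\xi \in [0, \alpha]$. Using the extension formula $f^+_e \equiv x'_e/\l'_u$ on $(\l_u(\phi), \l_u(\phi+\alpha)]$, a direct integration yields $F^+_e(\l_u(\phi+\xi)) = F^+_e(\l_u(\phi)) + x'_e \cdot \xi$. Applying FIFO at both endpoints and using $F^+_e(\l_u(\phi)) = F^-_e(\l_v(\phi))$ from Lemma~\ref{lemma:nash_flow_characterization} then gives $F^-_e(\l_v(\phi+\xi)) = F^-_e(\l_v(\phi)) + x'_e \cdot \xi$. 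Differentiating in $\xi$ and applying the chain rule leaves $f^-_e(\l_v(\phi+\xi)) \cdot \l'_v = x'_e$, which is precisely the claimed formula on $(\l_v(\phi), \l_v(\phi+\alpha)]$.

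For the remaining cases ($x'_e = 0$ or $e \notin E'_\phi$) the extension sets $f^+_e \equiv 0$ on the relevant input interval. If $e \notin E^*_\phi$, then $z_e(\l_u(\phi)+\tau_e) = 0$ by Lemma~\ref{lemma:queue_implies_active} and condition \eqref{eqn:alpha_others} prevents $e$ from becoming active inside the extension, so the queue remains zero and $f^-_e \equiv 0$ on the output interval, matching $x'_e/\l'_v = 0$. If instead $e \in E^*_\phi$ with $x'_e = 0$, then $\rho_e(\l'_u, 0) = 0$ together with \eqref{eqn:l'_v_min} forces $\l'_v = 0$ and the output interval is empty. The claim for $\theta > \l_v(\phi + \alpha)$ follows by the same kind of reasoning: after $\l_u(\phi+\alpha)$ the restricted extension has no further inflow on $e$, and condition \eqref{eqn:alpha_resetting} ensures that any queue on an active arc has emptied by $\l_u(\phi+\alpha)+\tau_e = \l_v(\phi+\alpha)$, so no further outflow can be produced.

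The hard part will be verifying that an active arc with $x'_e > 0$ really remains active throughout the entire interval $[\phi, \phi+\alpha]$ and not only at its endpoints. Establishing $\l_v(\phi+\xi) = \T_e(\l_u(\phi+\xi))$ uniformly in $\xi$ requires carefully combining \eqref{eqn:l'_v_tight} with the Bellman property of Lemma~\ref{lemma:extension_is_flow} so as to rule out the possibility that a strictly shorter path to $v$ temporarily emerges mid-extension and pushes $e$ out of the current shortest paths network; everything else reduces to a straightforward differentiation once this stability is in hand.
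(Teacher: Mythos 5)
Your overall strategy — use the FIFO identity $F^+_e(\theta) = F^-_e(\T_e(\theta))$, the Nash characterization $F^+_e(\l_u(\phi)) = F^-_e(\l_v(\phi))$, and the affine growth of $F^+_e \circ \l_u$ during the phase to compute $F^-_e \circ \l_v$ and then differentiate — is a legitimate and more compact alternative to the paper's proof, which instead determines $f^-_e$ directly in three local cases ($x'_e = 0$; active non-resetting with no queue forming; active with a queue present or forming). You unify the two nonzero cases, which is conceptually cleaner. Two points, however, need fixing.

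First, you invoke Lemma~\ref{lemma:extension_is_flow} to keep an active arc with $x'_e > 0$ active throughout the phase. But in the paper's logical order the flow-conservation half of Lemma~\ref{lemma:extension_is_flow} itself uses the outflow identity $f^-_e = x'_e/\l'_v$ that you are trying to establish. The Bellman half of that lemma, which is the only part you actually need, can indeed be proved from the inflow formula alone (it analyzes $\T_e \circ \l_u$ via $z_e$ and never touches $f^-_e$); but you must explicitly split the lemma and order the Bellman argument first, otherwise the chain of implications is circular. As written this is a real gap.

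Second, your argument for $\theta > \l_v(\phi + \alpha)$ is wrong on resetting arcs: condition~\eqref{eqn:alpha_resetting} guarantees precisely that the queue has \emph{not} drained by time $\l_u(\phi+\alpha)+\tau_e$, and the equality $\l_u(\phi+\alpha)+\tau_e = \l_v(\phi+\alpha)$ fails whenever the queue is positive, since then $\l_v(\phi+\alpha) = \T_e(\l_u(\phi+\alpha)) = \l_u(\phi+\alpha)+\tau_e + z_e(\l_u(\phi+\alpha)+\tau_e)/\nu_e$ strictly exceeds $\l_u(\phi+\alpha)+\tau_e$. The correct reasoning is another application of FIFO: since $e$ remains active at $\phi+\alpha$, one has $\T_e(\l_u(\phi+\alpha)) = \l_v(\phi+\alpha)$, hence $F^-_e(\l_v(\phi+\alpha)) = F^+_e(\l_u(\phi+\alpha))$; since the inflow vanishes beyond $\l_u(\phi+\alpha)$, so does the outflow beyond $\l_v(\phi+\alpha)$. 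A related imprecision affects the non-active case with $x'_e = 0$: a vanishing queue does not by itself preclude positive outflow at times $\theta \le \l_u(\phi)+\tau_e$ coming from particles before $\phi$; what rules this out is again the saturation $F^+_e(\l_u(\phi)) = F^-_e(\l_v(\phi))$ from Lemma~\ref{lemma:nash_flow_characterization}, which forces $f^-_e = 0$ a.e.\ on $(\l_v(\phi),\infty)$ before the extension even starts.
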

    \begin{proof}
    Note that throughout this proof $\l_v(\varphi)$ for $\varphi > \phi$ is not the earliest arrival time, but the linear
    extension $\l_v(\varphi) \coloneqq \l_v(\phi) + (\varphi - \phi) \cdot \l'_v$. Let $I \coloneqq (\phi, \phi + \alpha]$
    be the flow of interest and $I_v := (\l_v(\phi, \l_v(\phi + \alpha)]$ for all nodes $v$. The particles in $[0, \phi]$ do not interfere with the outflow function $f^+_v$ within $I_v$, since otherwise the restricted Nash flow over time would not have chosen the fastest direction. We divide the proof into three cases.
    
    \subparagraph*{Case 1:}~$x'_e = 0$.
    
    Since $f_e^+(\theta) = x'_e / \l'_u = 0$ for all $\theta \in I_u$ we have that $f_e^-(\theta) = 0
    =  x'_e / \l'_v$ for all~$\theta \in I_v$ and of course $f_e^-(\theta) = 0$ for $\theta > \l_u(\phi + \alpha)$.
    
    \subparagraph*{Case 2:} $x'_e > 0$, $e \not\in E^*$ and~$x'_e / \nu_e \leq \l'_u$.
    
    We know that $e$ is active during $I$ and that $f_e^+(\theta) = x'_e / \l'_u \leq \nu_e$ for~$\theta \in I_u$.
    Furthermore, there is no queue at the beginning and no queue is building up. Therefore, we have~$\l_v(\phi) = \l_u(\phi)
    + \tau_e$. The definition of thin flows with resetting provides $\l'_u = \l'_v$ and together with the definition of the
    extension we obtain
    \[\l_u(\phi + \alpha) + \tau_e = \l_u(\phi) + \alpha \cdot \l'_u + \tau_e= \l_v(\phi) + \alpha \cdot \l'_v = 
    \l_v(\phi + \alpha).\]
    Hence, the last flow entering $e$ at time $\l_u(\phi + \alpha)$ leaves the edge at time $\l_v(\phi + \alpha)$ and since the outflow rate at time $\theta \in I_v$ equals the inflow rate at time $\theta - \tau_e \in I_u$ we get
    \[f^-_e(\theta) = f^+_e(\theta - \tau_e) = \frac{x'_e}{\l'_u} = \frac{x'_e}{\l'_v}.\]
    Furthermore, no flow enters $e$ after $\l_u(\phi + \alpha)$, and therefore the outflow function is zero after~$\l_v(\phi + \alpha)$.
    
    \subparagraph*{Case 3:} $x'_e > 0$ and ($e \in E^*$ or $x'_e / \nu_e > \l'_u$).
    
    This means there is either a queue at the beginning and throughout the phase or there is no queue at the beginning but
    immediately after $\phi$ a queue will build up. In either case, $e$ is active for all particles in $I$ and~$\l'_v = x'_e / \nu_e$. The
    inflow rate is $f^+_e(\theta) = x'_e/\l'_u$ for all $I_u$, and therefore the amount of flow entering $e$
    during this interval is
    \[A \coloneqq x'_e/\l'_u \cdot (\l_u(\phi + \alpha) - \l_u(\phi)) = x'_e/\l'_u \cdot
     (\l_u(\phi) + \alpha \cdot \l'_u - \l_u(\phi)) = x'_e \cdot \alpha.\]
     Since $\phi$ leaves $e$ at time $\l_v(\phi)$ and $f_e^-$ operates at capacity rate the last particle $\phi + \alpha$
     leaves $e$ at time
    \[\l_v(\phi) + A / \nu_e = \l_v(\phi) + \alpha \cdot x'_e / \nu_e =  \l_v(\phi) + \alpha \cdot \l'_v = \l_v(\phi +
     \alpha).\]
    Therefore, we have for all $\theta \in I_v$ that
    \[f_e^-(\theta) = \nu_e = \frac{x'_e}{\l'_v}.\]
    Since no flow is entering after $\l_u(\phi + \alpha)$ and particle $\phi + \alpha$ leaves $e$ at time $\l_v(\phi +
    \alpha)$ the outflow function is zero afterwards. This completes the proof.
    
    \end{proof}
  
   \subsection{Bound on node labels of thin flows with resetting} \label{ap:lemma_bound_for_l'}
     \begin{lemma}  \label{lemma:bound_for_l'}
     For every thin flow with resetting $(x',\l')$ in $G$ we have $\l'_v \leq 1/\sigma$ for all $v \in V$.
     \end{lemma}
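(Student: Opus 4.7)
The plan is to bound $\l'_v$ by induction on a topological ordering of the acyclic subgraph $G' = (V, E')$ on which the thin flow with resetting lives. The auxiliary estimate I would establish first is that
\[
x'_e \;\leq\; 1 \qquad \text{for every } e \in E'.
\]
This follows from a standard path decomposition: $x' \in K(E', x'_1, \dots, x'_\n)$ is a static $\Sources$\nobreakdash-$\sink$\nobreakdash-flow of total value $\sum_{i=1}^{\n} x'_i = 1$ in the acyclic graph $G'$, so it can be written as a non-negative combination of source-sink path flows whose total weight is $1$, and therefore every arc carries at most one unit of flow. Combined with $\nu_e \geq \nu_{\min} \geq \sigma$, this gives the uniform congestion bound $x'_e/\nu_e \leq 1/\sigma$ for every $e \in E'$.

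For the base case, consider a source $\source_i \in \Sources$. Using $x'_i \leq 1$ and $r_i \geq r_{\min} \geq \sigma$, condition~\eqref{eqn:l'_s} directly gives
\[
\l'_{\source_i} \;=\; \frac{x'_i}{r_i} \;\leq\; \frac{1}{r_{\min}} \;\leq\; \frac{1}{\sigma}.
\]
For the inductive step, fix a non-source node $v \in V \setminus \Sources$ and assume the bound already holds for every predecessor of $v$ in $G'$. Since every node is reachable from a source within $G'$, there exists at least one incoming arc $e = uv \in E'$. By the inductive hypothesis $\l'_u \leq 1/\sigma$, and by the auxiliary bound $x'_e/\nu_e \leq 1/\sigma$. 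Hence, irrespective of whether $e \in E^*$ or $e \in E' \setminus E^*$,
\[
\rho_e(\l'_u, x'_e) \;\leq\; \max\!\Set{\l'_u,\; x'_e/\nu_e} \;\leq\; \frac{1}{\sigma}.
\]
Condition~\eqref{eqn:l'_v_min} then forces $\l'_v$, being a minimum of such $\rho_e$ values over all incoming arcs in $E'$, to satisfy $\l'_v \leq 1/\sigma$.

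The only non-trivial ingredient in this plan is the uniform bound $x'_e \leq 1$ in the multi-source setting; everything else is a direct unfolding of Definition~\ref{defi:thin_flow} along a topological order. I do not expect any real obstacle, since the acyclicity of $G'$ guaranteed by the definition ensures both that a topological order exists and that a path decomposition with total weight $1$ is available.
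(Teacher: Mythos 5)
Your proof is correct and follows the same line of reasoning as the paper's: the paper likewise observes that $x'_i, x'_e \le 1$ because the flow value is $1$, that source labels are $x'_i/r_i \le 1/r_{\min}$, and that every non-source label equals either $\l'_u$ or $x'_e/\nu_e$ for some incoming arc $e=uv$, so unfolding this (which is exactly your topological induction) bounds every label by $\max\{1/r_{\min},1/\nu_{\min}\} = 1/\sigma$. You have simply made explicit the induction along a topological order that the paper leaves implicit.
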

      \begin{proof}
      It holds that $\l'_{\source_i} = x'_i / r_i$ and for $v \in V \backslash \Sources$
        the $\l'_v$ labels are equal to $\l'_u$ or $x'_e/\nu_e$ for some incoming arc~$e = uv$. It follows that all $\l'$ labels  in the original graph are bounded
        from above by
        \[\max\Set{\left(\max_{i = 1, \dots, \n} x'_i / r_i\right), \left(\max_{e\in E}  x'_e / \nu_e\right)} 
        \leq \max\Set{ 1 / r_{\min}, 1 /\nu_{\min}} = 1/\sigma.\]
        Note that all $x'_i$ and~$x'_e$ are bounded by $1$ from above since the flow value of $x'$ is~$1$.
      \end{proof}

  \section{Proofs}       
  \subsection{Proof of Lemma~\ref{lemma:nash_flow_characterization}} \label{ap:lemma_nash_flow_characterization}
    \begin{proof}   
      ``$\Rightarrow$'': Let $\xi \in [0, \phi]$ be the particle of largest value with~$F_e^+(\l_u(\xi)) =
      F_e^-(\l_v(\phi))$. This $\xi$ exists because of the intermediate value theorem, together with the fact that $F_e^+
      \circ \l_u$ is continuous and the following inequality, which follows by the monotonicity of $F_e^-$ and Lemma~\ref{lemma:in_out_flow}:
      \[F_e^+(\l_u(0)) = 0 \quad \leq \quad F_e^-(\l_v(\phi))\quad \leq \quad F_e^-(\T_e(\l_u(\phi)))
      = F_e^+(\l_u(\phi)).\] 
      Note that the second inequality is true because of~$\l_v (\phi) \leq \T_e(\l_u(\phi))$.      
      In the case of $\xi = \phi$ we are done since~$F_e^-(\l_v(\phi)) = F_e^+(\l_u(\xi)) = F_e^+(\l_u(\phi))$.
      Suppose~$\xi < \phi$. For all particles $\varphi \in (\xi, \phi]$ we know that~$\T_e(\l_u(\varphi)) \not=
      \l_v(\phi)$ because, otherwise, we had with Lemma~\ref{lemma:in_out_flow}~\ref{item:cumulative_in_equals_out} that $F_e^+(\l_u(\varphi)) = F_e^-(\T_e (\l_u(\varphi))) = F_e^-(\l_v(\phi))$
      which would contradict the maximality of~$\xi$. Hence, $e$ is not active for particles in $(\xi, \phi]$ which
      implies $f^+_e(\theta) = 0$ for almost all $\theta \in \l_u((\xi, \phi]) = (\l_u(\xi),\l_u(\phi)]$ since $f$ is a
      Nash flow over time. This leads to
      \[F_e^+(\l_u(\phi)) - F_e^-(\l_v(\phi)) = F_e^+(\l_u(\phi)) - F_e^+(\l_u(\xi)) = 
      \int_{\l_u(\xi)}^{\l_u(\phi)} f_e^+(\vartheta) \diff \vartheta = 0,\]
      which finishes the first part.
      The second part follows directly from
      $\l_{\source_i}(\phi) = \T_i(\phi) = F_i(\phi) / r_i \quad$ for all $i = 1, \dots, \n$.
      
      ``$\Leftarrow$'': Given a particle $\phi$ and an arc $e = uv$ such that $e$ is not active for $\phi$,
      i.e.,~$\l_v(\phi) < \T_e(\l_u(\phi))$. The continuity of $\l_v$ and
      $\T_e \circ \l_u$ implies that there is an $\varepsilon > 0$ with $\l_v(\phi + \varepsilon) <
      \T_e(\l_u(\phi - \varepsilon))$ and $e$ is not active for all particles in~$[\phi - \varepsilon, \phi +
      \varepsilon]$.
      This, the fact that $f_e^+$ and~$f_e^-$ are non-negative, and Lemma~\ref{lemma:in_out_flow} gives us
      \begin{eqnarray*} 0 &\leq& \int_{\l_u(\phi-\varepsilon)}^{\l_u(\phi+\varepsilon)} f_e^+(\vartheta) \diff \vartheta \\
        &=&
        \int_{\T_e(\l_u(\phi-\varepsilon))}^{\T_e(\l_u(\phi+\varepsilon))} f_e^-(\vartheta) \diff \vartheta \\
        &\leq& \int_{\l_v(\phi + \varepsilon)}^{\T_e(\l_u(\phi+\varepsilon))} f_e^-(\vartheta) \diff \vartheta \\
        &=& F_e^-(\T_e(\l_u(\phi+\varepsilon))) - F_e^-(\l_v(\phi + \varepsilon)) \\
        &=& F_e^+(\l_u(\phi+\varepsilon)) - F_e^-(\l_v(\phi + \varepsilon)) \\
        &\stackrel{(ii)}{=}& 0.
      \end{eqnarray*}    
      Hence, $f_e^+(\theta) = 0$ for almost all~$\theta \in [\l_u(\phi - \varepsilon), \l_u(\phi + \varepsilon)]$. 
      In other words, for almost all $\theta \in [0, \infty)$ it holds that~$\theta \not\in \l_u(\Phi_e) \Rightarrow
      f_e^+(\theta) = 0$. This is true because for $\theta \geq \l_u(0)$ we find a particle $\phi$ with $\l_u(\phi) = \theta$, due to the fact, that $\l_u$ is continuous and unbounded, and for all $\theta < \l_u(0)$ we have
      $f_e^+(\theta) = 0$, since no flow can reach $u$ faster than~$\l_u(0)$.   
      Finally, we get $\l_{\source_i}(\phi) = \T_i(\phi)$ since $\l_{\source_i}(\phi) \cdot r_i =
      F_i(\phi) = \T_i(\phi) \cdot r_i$ for all~$i = 1, \dots, \n$.
      This shows that $f$ is a Nash flow over time, which finishes the proof.
    \end{proof}
    
  \subsection{Proof of Theorem~\ref{thm:existence_of_NTF}} \label{ap:thm_existence_of_NTF}
  \begin{proof}
  We consider the following compact, convex, and non-empty set
  \[A \coloneqq \Set{\left((x'_i)_{i=1}^\n, (x'_e)_{e\in E}\right) | x'_i \geq 0, \quad \sum_{i = 1}^\n x'_i = 1, \quad 
  (x'_e)_{e\in E} \in K(E', x'_1, \dots, x'_{\n})}\]
  
  and the set-valued map $\Gamma \colon A \to 2^A$ defined by
  \[x' \mapsto \Set{y' \in A | \begin{matrix}[rl] y'_i = 0& \text{ for all } i \in \Set{1, \dots, \n} \text{ with } 
  \l'_{\source_i} < x'_i / r_i, \\ 
  y'_e = 0& \text{ for all } e = uv \in E' \text{ with } \l'_v < \rho_e(\l'_u, x'_e)\end{matrix}}  \]
  
  where $(\l'_v)_{v \in V}$ are the node labels associated with $x'$ given by the following Bellman equations
   \begin{alignat*}{2}
   \l'_{\source_i} &= \min\left(\{\:x'_i/r_i\:\} \cup \Set{\rho_e(\l'_u, x'_e) | e = u\source_i\in E'}\right)
   &&\quad \text{ for } i = 1, \dots, \n \\ 
   \l'_v &= \min_{e = uv\in E'} \rho_e(\l'_u, x'_e) &&\quad \text{ for } v
   \in V\backslash \Sources
   \end{alignat*}
   
   which are uniquely defined due to the fact that $G'$ is acyclic.
  We use the following version of the Kakutani's fixed point theorem \cite{Kakutani1941}.
  
  \begin{theorem}[Kakutani's Fixed Point Theorem]
  Let $A$ be a compact, convex and non-empty subset of $\R^N$ and $\Gamma\colon A \to 2^A$, such that for every
  $x' \in A$ the image $\Gamma(x')$ is non-empty and convex. Suppose the set $\Set{(x', y') | y' \in \Gamma(x')}$ is
  closed. Then there is a fixed point $x'_*$ of $\Gamma$, i.e.,~$x'_* \in \Gamma(x'_*)$.
  \end{theorem}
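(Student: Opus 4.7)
The plan is to reduce the set-valued Kakutani statement to the classical Brouwer Fixed Point Theorem by approximating the correspondence $\Gamma$ by continuous single-valued maps $f_n : A \to A$, applying Brouwer to each approximation, and passing to a limit, exploiting the closed-graph hypothesis together with the convexity of each $\Gamma(x)$ in the final step.

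First, for every positive integer $n$, I would construct a continuous approximation $f_n$ as follows. Choose a finite $1/n$-net $\{v_1^n, \dots, v_{k_n}^n\} \subseteq A$ and pick arbitrary selections $y_i^n \in \Gamma(v_i^n)$; then let $\{\varphi_i^n\}$ be a continuous partition of unity subordinate to the cover of $A$ by open balls of radius $1/n$ around the $v_i^n$. Define $f_n(x) := \sum_i \varphi_i^n(x)\, y_i^n$. Since each $y_i^n$ lies in the convex set $A$, the map $f_n$ sends $A$ into $A$ and is continuous. Brouwer's theorem then yields a fixed point $x_n = f_n(x_n) \in A$, and by compactness of $A$ we may extract a convergent subsequence $x_n \to x_*$.

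Second, I would verify $x_* \in \Gamma(x_*)$. The closed-graph property combined with compactness of $A$ yields upper hemicontinuity of $\Gamma$: for any open $U \supseteq \Gamma(x_*)$ there is a neighborhood $W$ of $x_*$ with $\Gamma(x) \subseteq U$ whenever $x \in W$. Otherwise one could produce points $(x_n, y_n)$ with $y_n \in \Gamma(x_n) \setminus U$ and $x_n \to x_*$, extract a subsequence $y_n \to y_* \notin U$, and contradict the closedness of the graph at $(x_*, y_*)$. For $n$ sufficiently large, $x_n \in W$ and every $v_i^n$ with $\varphi_i^n(x_n) > 0$ also lies in $W$, so the corresponding $y_i^n \in \Gamma(v_i^n) \subseteq U$. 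Hence $x_n$, being a convex combination of such $y_i^n$, lies in the convex hull of $U$. Letting $U$ range over convex open neighborhoods of the closed convex set $\Gamma(x_*)$ and intersecting forces $x_* \in \Gamma(x_*)$.

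The main obstacle will be the transition from the combinatorial closed-graph assumption to the analytic upper-hemicontinuity property exploited in the limit step, and the careful interplay with convexity: it is precisely the convex-valuedness of $\Gamma(x_*)$ that allows the convex combination defining $x_n$ to stay inside arbitrarily small convex open neighborhoods of $\Gamma(x_*)$, which is indispensable for the final conclusion. Closedness of $\Gamma(x_*)$ itself follows from the closed-graph hypothesis restricted to $\{x_*\} \times A$.
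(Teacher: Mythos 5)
The statement you were asked about is Kakutani's Fixed Point Theorem, which the paper does \emph{not} prove: it is displayed inside the proof of Theorem~\ref{thm:existence_of_NTF} purely as a cited classical tool, with a reference to Kakutani's 1941 paper. So there is no ``paper's proof'' to compare against. Your argument is the standard reduction of Kakutani to Brouwer via partition-of-unity approximation of the correspondence, and it is correct. The key steps are all in place: the approximants $f_n$ are continuous $A\to A$ maps because the $y_i^n$ lie in the convex set $A$; Brouwer applies to each $f_n$; compactness gives a convergent subsequence of fixed points; the closed-graph hypothesis plus compactness of $A$ upgrades to upper hemicontinuity; and for a \emph{convex} open neighborhood $U\supseteq\Gamma(x_*)$ the Brouwer fixed points eventually land in $U$ (not merely in its convex hull, since $U$ is already convex), so $x_*\in\overline{U}$ for every such $U$, forcing $x_*\in\Gamma(x_*)$ because $\Gamma(x_*)$ is closed (by the closed graph) and convex. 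One small stylistic caution: you reuse the symbol $x_n$ both for the sequence in the contradiction argument establishing upper hemicontinuity and for the Brouwer fixed points; renaming one of them would avoid ambiguity. Substantively the proof is sound, and it is a genuine proof of a theorem the paper only quotes.
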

  
  We show that all conditions are satisfied.
  \begin{itemize}
  \item The set $\Gamma(x')$ is non-empty, because if we consider exactly the sources with $\l'_{\source_i} = x'_i / r_i$ and
  the arcs $e = uv$ with $\l'_v = \rho_e(\l'_u, x'_e)$, then there has to be at least one path $P$ from such a source
  $\source_i$ to the sink $\sink$. If we set $y'_i = 1$ and $y_e = 1$ for all arcs $e$ on $P$ and every other value to $0$
  we obtain an element in~$\Gamma(x')$. 
  
  \item Clearly, $\Gamma(x')$ is convex since the sources and arcs that can be used for sending flow are fixed within
  the set, and no convex combination of two elements uses sources or arcs different from the ones of the
  original elements.
  
  \item In order to show that $\Set{(x', y') | y' \in \Gamma(x)}$ is closed let $(x^n, y^n)_{n\in \N}$ be a sequence
  within this set, i.e.,~$y^n \in \Gamma(x^n)$. Since both sequences, $(x^n)_{n\in \N}$ and $(y^n)_{n\in \N}$, are contained in the compact set $A$ they both have a
  limit $x^*$ and $y^*$ within~$A$. Let $(\l^n)_{n \in \N}$ be the sequence of associated node labels of $(x^n)$ and $\l^*$ the
  node label of~$x^*$. Note that the mapping $x' \mapsto \l'$ is continuous, and therefore it holds that~$\l^* = \lim_{n \to
  \infty} \l^n$.
  
  We prove that~$y^* \in \Gamma(x^*)$. Suppose there is an $i \in \Set{1, \dots, \n}$ with $y^*_i > 0$ and~$\l^*_{\source_i} <
  x^*_i / r_i$. Then there has to be an $n_0 \in \N$ with $y^n_i > 0$ and $\l^n_{\source_i} < x^n_i / r_i$ for all~$n \geq
  n_0$. But this is a contradiction to~$y^n \in \Gamma(x^n)$. Suppose there is an arc $e = uv \in E'$ with $y^*_e > 0$
  and~$\l^*_v < \rho_e(\l^*_u, x^*_e)$. But again since $\rho_e$ is continuous there has to be an $n_0 \in \N$ such that
  $y^n_e > 0$ and $\l^n_v < \rho_e(\l^n_u, x^n_e)$ for all~$n \geq n_0$. Hence, $\Set{(x', y') | y' \in \Gamma(x)}$ is
  closed.
  \end{itemize}
  
  Since all conditions for the Kakutani's fixed point theorem are satisfied, there has to be a fixed point~$x^*$
  of~$\Gamma$. Let $\l^*$ be the corresponding node labeling. We show that it satisfies the thin flow
  conditions~\eqref{eqn:l'_s} to~\eqref{eqn:l'_v_tight}. If we have $x^*_i > 0$, then $\l^*_{\source_i} = x^*_i / r_i$
  follows from $x^* \in \Gamma(x^*)$. But also if $x^*_i = 0$, it holds that $0 \leq \l^*_{\source_i} \leq x^*_i / r_i =
  0$, and therefore we have equality, which yields~\eqref{eqn:l'_s}. Conditions~\eqref{eqn:l'_s_min}
  and~\eqref{eqn:l'_v_min} are satisfied by the construction of~$\l^*$. Finally, for every arc $e = uv \in E'$ with
  $x^*_e > 0$ it holds that $\l^*_v = \rho_e(\l^*_u, x^*_e)$ since $x^* \in \Gamma(x^*)$, which shows
  condition~\eqref{eqn:l'_v_tight}. This shows that $x^*$ together with $\l^*$ forms a thin flow with resetting which
  completes the proof.
  \end{proof}

  \subsection{Proof of Theorem~\ref{thm:l'_equations}} \label{ap:thm_l'_equations}
  
  \begin{proof}
  In Lemma~\ref{lemma:queue_implies_active} we showed that $G'_\phi$ and $E^*_\phi$ satisfy the preconditions.
  Furthermore, we have $x'_i(\phi) = f_i(\phi) \geq 0$ for all $i = 1, \dots, \n$ and $\sum_{i = 1}^\n x'_i(\phi) = \sum_{i = 1}^\n f_i(\phi) = 1$ for almost all~$\phi \in \Flow$. It remains to show that the equations~\eqref{eqn:l'_s} to~\eqref{eqn:l'_v_tight} are satisfied for almost all particles.
  For this let $\phi$ be a particle such that for all $e = uv$ the derivatives of $x_e$, $\l_v$, and $\T_e \circ \l_u$ exist and $x_e'(\phi)  = f_e^+(\l_u(\phi)) \cdot \l'_u(\phi) = f_e^-(\l_v(\phi)) \cdot \l'_v(\phi)$, which is almost everywhere.
  From Lemma~\ref{lemma:nash_flow_characterization} follows \eqref{eqn:l'_s} directly.
    
   For~\eqref{eqn:l'_s_min} and~\eqref{eqn:l'_v_min} first note that since $z_e$ is Lipschitz continuous, so is $\T_e$. We thus obtain from~\eqref{eqn:z'_definition} that the derivative of $\T_e(\theta)$ is almost everywhere
      \begin{equation} \label{eqn:derivative_of_lambda}
        \T'_e(\theta) = \begin{cases}
        \frac{f_e^+(\theta)}{\nu_e} & \text{ if } z_e(\theta + \tau_e) > 0,\\
        \max\Set{\frac{f_e^+(\theta)}{\nu_e}, 1} & \text{ if } z_e(\theta + \tau_e) = 0.
        \end{cases}
      \end{equation}
  In the case of $z_e(\l_u(\phi)+\tau_e) > 0$ we have
  \[\frac{\diff}{\diff \phi} \T_e(\l_u(\phi)) = \T'_e(\l_u(\phi)) \cdot \l'_u(\phi)
    \stackrel{\eqref{eqn:derivative_of_lambda}}{=} \left(\frac{f_e^+(\l_u(\phi))}{\nu_e}\right) \cdot \l'_u(\phi)
   = \frac{x'_e(\phi)}{\nu_e}\] 
  and if $z_e(\l_u(\phi)+\tau_e) = 0$, it holds that
  \[\frac{\diff}{\diff \phi} \T_e(\l_u(\phi)) \stackrel{\eqref{eqn:derivative_of_lambda}}{=} \max\Set{\frac{f_e^+(\l_u(\phi))}{\nu_e},
      1}\cdot \l'_u(\phi) = \max\Set{\frac{x'_e(\phi)}{\nu_e}, \l'_u(\phi)}.\] 
  Since the first case is equivalent to $e \in E^*_\phi$ and the second to $e \in E'_\phi \backslash E^*_\phi$ we obtain
 \[\frac{\diff}{\diff \phi} \T_e(\l_u(\phi)) = \rho_e(\l'_u(\phi), x'_e(\phi)).\]
 This equality together with the Bellman equations~\eqref{eqn:bellman} and Lemma~\ref{lem:diff_rule_for_min}, the differentiation rule for a minimum, provides
  \[\l'_{\source_i}(\phi) \leq \min_{e = u\source_i \in E'_\phi} \rho_e(\l'_u(\phi), x'_e(\phi)) \quad\text{ and }\quad
  \l'_v(\phi) = \min_{e = uv \in E'_\phi} \rho_e(\l'_u(\phi), x'_e(\phi)).\]
  For \eqref{eqn:l'_v_tight} suppose $x'_e(\phi) = f_e^-(\l_v(\phi)) \cdot \l'_v(\phi) > 0$. With \eqref{eqn:defi_outflow} we obtain
  \begin{align*}
 \l'_v(\phi) = \frac{x'_e(\phi)}{f_e^-(\l_v(\phi))} &= \begin{dcases}
     \frac{x'_e(\phi)}{\min\Set{f^+_e(\l_u(\phi)),\nu_e}} &\text{ if } z_e(\l_u + \tau_e) = 0,\\
     \frac{x'_e(\phi)}{\nu_e} &\text{ else},
   \end{dcases}\\
   &=\begin{dcases}
        \max\Set{\l'_u, \frac{x'_e(\phi)}{\nu_e}} &\text{ if } e \in E'_\phi \backslash E^*_\phi,\\
        \frac{x'_e(\phi)}{\nu_e} &\text{ if } e \in E^*_\phi,
      \end{dcases}\\
      & = \rho_e(\l'_u(\phi), x'_e(\phi)).
  \end{align*}
  This shows that the derivatives $(x'_i(\phi))_{i =1}^\n$, $(x'_e(\phi))_{e\in E'_\phi}$, and $(\l'_v(\phi))_{v\in V}$ form a thin flow with resetting.
  \end{proof}
  
  \subsection{Proof of Lemma~\ref{lemma:extension_is_flow}}\label{ap:lemma_extension_is_flow}
  
  \begin{proof}
  In order to prove that the $\alpha$-extension forms a flow over time we have to show that the flow conservation is
  fulfilled at every $v \in V\backslash \set{\sink}$, which is true because for all $\theta \in (\l_v(\phi), \l_v(\phi +
  \alpha)]$ it holds that
  \begin{align*}\sum_{e\in \delta^+(v)} f_e^+(\theta) - \sum_{e \in \delta^-(v)} f_e^-(\theta) 
  &= \sum_{e\in \delta^+(v)} x'_e / \l'_v - \sum_{e \in \delta^-(v)} x'_e / \l'_v \\
  &= \begin{cases}
  0 & \text{ if } v \in V \backslash (\Sources \cup \set{\sink}) \\
  x'_i / \l'_v = r_i & \text{ if } v = \source_i \in \Sources.
  \end{cases}\end{align*}
  For $\theta > \l_v(\phi + \alpha)$ all functions as well as the inflow rates are zero, and therefore the flow conservation holds as well.
     
  For the second part we show that the Bellman equations \eqref{eqn:bellman} for the earliest arrival times hold.
  Given an arc $e = uv \in E$, we distinguish between three cases.
  
  \subparagraph*{Case 1:}~$e \in E\backslash E'_\phi$.
  
  Since $\alpha$ satisfies equation~\eqref{eqn:alpha_others} it is satisfied for all $\xi \in \:(0, \alpha]$ and hence,
  \[\l_v(\phi + \xi) = \l_v(\phi) + \xi \cdot \l'_v \stackrel{\eqref{eqn:alpha_others}}{\leq} \l_u(\phi) + 
  \xi \cdot \l'_u + \tau_e \leq \T_e(\l_u(\phi)  + \xi \cdot \l'_u) = \T_e(\l_u(\phi + \xi)).\]
  
  \subparagraph*{Case 2:} $e \in E'_\phi \backslash E^*_\phi$ and~$\l'_u \geq x'_e/\nu_e$.
  
  Since $e$ is active we have $\l_v(\phi) = \T_e(\l_u(\phi)) = \l_u(\phi) + \tau_e$ and \eqref{eqn:l'_v_min} implies
  $\l'_v \leq \l'_u$. There is no queue building up, which means $z_e(\l_u(\phi + \xi)+\tau_e) = 0$ for all $\xi \in (0, \alpha]$. Combining these yields
  \[\l_v(\phi + \xi) = \l_v(\phi) + \xi \cdot \l'_v \stackrel{\eqref{eqn:l'_v_min}}{\leq} \l_u(\phi) + \tau_e + \xi \cdot \l'_u = \l_u(\phi + \xi) + \tau_e= \T_e(\l_u(\phi + \xi)).\]
  
  \subparagraph*{Case 3:}  $e \in E^*_\phi$ or ($e \in E'_\phi$ and $\l'_u < x'_e/\nu_e$).
  
  Again, $e$ is active, which means~$\l_v(\phi) = \T_e(\l_u(\phi)) = \l_u(\phi) + \tau_e + z_e(\l_u(\phi)+\tau_e)/\nu_e$.
  Additionally, $e \in E^*_\phi$ or $x'_e/\l'_u \leq \nu_e$ together with the thin flow condition~\eqref{eqn:l'_v_min}
  implies~$\l'_v \leq x'_e / \nu_e$. Since $f_e^+(\l_u(\phi)) - \nu_e = x'_e/\l'_u - \nu_e > 0$, equation~\eqref{eqn:z'_definition} implies~$z'_e(\l_u(\phi)+\tau_e) = f_e^+(\l_u(\phi)) - \nu_e= x'_e / \l'_u - \nu_e  $. Rearranging
  gives us,~$x'_e / \nu_e = z_e'(\l_u(\phi)+\tau_e) \cdot \l'_u / \nu_e + \l'_u$. Hence, for all $\xi \in \:(0, \alpha]$ we obtain with \eqref{eqn:l'_v_min} that
  \begin{align*}\l_v(\phi + \xi) &= \l_v(\phi) + \xi \cdot \l'_v \\
  &\leq \l_v(\phi) + \xi \cdot x'_e / \nu_e\\  
  &= \l_u(\phi) + \tau_e + z_e(\l_u(\phi)+\tau_e)/\nu_e+ \xi \cdot (z'(\l_u(\phi)+\tau_e) \cdot \l'_u / \nu_e + \l'_u)\\
  &= \l_u(\phi + \xi) + \tau_e + z_e(\l_u(\phi) + \tau_e + \xi \cdot \l'_u)/\nu_e \\
  &= \T_e(\l_u(\phi + \xi)).
  \end{align*} 
  This shows that there is no arc with an exit time earlier than the earliest arrival time, and therefore the left hand
  side of the Bellman equations is always smaller or equal to the right hand side. It remains to show that the
  equations hold with equality.
  For a source $\source_i$ we have $x'_i = f_i(\phi) = r_i \cdot \T'_i(\phi)$, and therefore
    \[\l_{\source_i}(\phi + \xi) = \l_{\source_i}(\phi) + \xi \cdot \l'_{\source_i} = \T_i(\phi) + \xi \cdot x'_i / r_i
     = \T_i(\phi) + \xi \cdot \T'_i(\phi) = \T_i(\phi + \xi)\]
     for all $\xi \in (0, \alpha]$.
    Hence, entering the network at a specific source is always a fastest option to reach it.
  For every node $v \in V \backslash \Sources$ there is at least one arc $e \in E'$ with $\l'_v
  = \rho(\l'_u, x'_e)$ in the thin flow due to \eqref{eqn:l'_v_min}. No matter if this arc belongs to Case 2 or
  Case 3 the corresponding equation holds with equality, which shows for all $\xi \in (0, \alpha]$ that
    \[\l_v(\phi + \xi) = \min_{e = uv \in E} \T_e(\l_u(\phi + \xi)).\]
  This completes the proof.
\end{proof}  

\subsection{Proof of Theorem~\ref{thm:finishing_nash_flow}} \label{ap:thm_finishing_nash_flow}
\begin{proof}
  In the first part we show that these $\alpha$-extensions lead to a restricted Nash flow on $[0,\infty)$. In
  the second part we prove, that all cumulative source inflow functions are unbounded, which shows that we have, indeed, a Nash flow over time.
  
  The process starts with the empty flow over time and the zero flow distribution, which is a restricted Nash flow
  over time for~$[0, 0]$. By applying Theorem~\ref{thm:extension} iteratively and choosing $\alpha$ maximal according
  to~\eqref{eqn:alpha_resetting} and~\eqref{eqn:alpha_others}, we obtain a sequence of restricted Nash flows over time
  $f^i$ for $[0,\phi_i]$ for $i = 1, 2 , \dots$, where the sequence $(\phi_i)_{i = 1}^\infty$ is strictly increasing. In
  the case that this sequence has a finite limit, say $\phi_\infty$, we can define a restricted Nash flow over time $f^\infty$ for
  $[0, \phi_\infty]$ by using the point-wise limit of the $x$- and $\l$-labels, which exists due to monotonicity and
  Lipschitz continuity of these functions. Then the process can be restarted from this limit point.
  
  Let $\mathcal{P}_G$ be the set
  of all particles $\phi \in \Flow$ for which there exists a restricted Nash flow over time on $[0, \phi]$ constructed as
  described above. The set $\mathcal{P}_G$ cannot have a maximal element because this could be extended by using
  Theorem~\ref{thm:extension}. But it also cannot have an upper bound since the limit of any convergent sequence would be
  contained in this set. Therefore, there exists an unbounded increasing sequence $(\phi_i)_{i = 1}^\infty \in \mathcal{P}_G$. From the
  corresponding restricted Nash flows over time we can construct the restricted Nash flow over time $f$ on $[0, \infty)$ by taking the point-wise limit of the $x$- and $\l$-labels.
  
  It remains to show that the inflow distribution of this restricted Nash flow over time is unbounded. For this we first
  show that the earliest arrival time $\l_{\sink}$ is unbounded. There cannot be an upper bound $B$
  on~$\l_{\sink}$ since the flow rate into $\sink$ is bounded by $N \coloneqq \sum_{e \in\delta^-(\sink)} \nu_e$ and
  with the FIFO principle we obtain that no particle~$\phi > N \cdot B$ reaches $\sink$ before time~$\phi / N > B$.
  Next, we show that all $\l$-labels are unbounded. Suppose this is not true. Since every node can reach $\sink$ there
  would be an arc $e = uv$, where $\l_u$ is bounded and $\l_v$ is not. Since $T_e$ is Lipschitz continuous $T_e \circ \l_u$
  would be bounded as well. But this contradicts that $\l_v(\phi) \leq \T_e(\l_u(\phi))$ goes to infinity for $\phi \to \infty$.
  Hence, $F_i(\phi) = \l_{\source_i}(\phi) \cdot r_i$ is unbounded for every $i = 1, \dots, n$, which completes the proof.
  \end{proof}

 \subsection{Proof of Lemma~\ref{lemma:thin_flow_decomp}} \label{ap:lemma_thin_flow_decomp}
    \begin{proof} 
      Let $\Paths$ be the set of all $\Sources$-$\sink$-paths in the current shortest paths network~$G' = (V, E')$. 
      Note, that $G'$ is always acyclic and $x'$ can, therefore, be described by the path vector~$(x'_P)_{P\in \Paths}$ due to the well-known flow decomposition theorem.  
      For $j = 1, \dots, \m$ let $\Paths_j$ be the set of all $\Sources$-$\sink$-paths that contain~$e_j$. These sets form a partition of $\Paths$ since every path has to use exactly one of the new arcs. By setting $x'^j \coloneqq \sum_{P
      \in \Paths_j} x'_P\big|_E$ we obtain the desired decomposition of $x'$, because $x'_P\big|_E$ for $P\in \Paths_j$
      conserves flow on all nodes except the ones in~$\Sources \cup \Set{\sink_j}$ and the same is true for their
      sums.
    
      Since $x'^j$ sends $\abs{x'^j}$ flow units from $\Sources$ over $e_j$ to $\sink_j$ we have $\abs{x'^j} = x'_{e_j}$.
      It remains to show that $x'_{e_j} = d_j$ for all~$j = 1, \dots, \m$.
      Suppose this is not true. Since $x'$ sends exactly $d_1 + \dots + d_{\m} = 1$ flow
      units from $\Sources$ to $\sink$, there has to be an index $a \in \Set{1, \dots, \m}$ with $x'_{e_a} > d_a$ and an index $b \in \Set{1, \dots,
      \m}$ with~$x'_{e_b} < d_b$.
      
      With Lemma~\ref{lemma:bound_for_l'} it follows that
      \[\l'_{\sink_b} \leq \frac{1}{\sigma} \stackrel{\eqref{equ:definition_of_nu}}{<}
      \frac{d_a}{\nu_{e_a}} < \frac{x'_{e_a}}{\nu_{e_a}} \stackrel{\eqref{eqn:l'_v_tight}}{\leq} \l'_{\sink}
        \quad \text{ as well as } \quad
        \frac{x'_{e_b}}{\nu_{e_b}} \stackrel{\eqref{equ:definition_of_nu}}{=} \underbrace{\frac{x'_{e_b}}{d_b}}_{< 1} \cdot
        \frac{2}{\sigma} < \underbrace{\frac{x'_{e_a}}{d_a}}_{> 1} \cdot \frac{2}{\sigma}
        \stackrel{\eqref{equ:definition_of_nu}}{=} \frac{x'_{e_a}}{ \nu_{e_a}} \stackrel{\eqref{eqn:l'_v_tight}}{\leq} \l'_{\sink}.\]
      But this is a contradiction, because \eqref{eqn:l'_v_min} yields that $\l'_{\sink} = \min\limits_{j = 1, \dots, \m}
      \rho_{e_j}(\l'_{\sink_j}, x'_{e_j})$ and the last two equations show~$\rho_{e_b}(\l'_{\sink_b}, x'_{e_b}) <
      \l'_{\sink}$. Hence, we have $\abs{x'^j} = d_j$ for all~$j = 1, \dots, \m$, which finishes the proof.
    \end{proof}

\subsection{Proof of Lemma~\ref{lemma:new_arc_are_active}} \label{ap:lemma_new_arc_are_active}
\begin{proof}
    For particle $\phi = 0$ there are no queues yet, and therefore the exit time for each arc $e$ is~$\T_e(\theta) =
    \theta + \tau_e$. Hence, $\l_{\sink_j}(0) = \delta_j$ for all $j = 1, \dots, \m$ and by construction we
    have~$\l_{\sink}(0) = \l_{\sink_j}(0) + \tau_{e_j} = \T_{e_j}(\l_{\sink_j}(0))$ for~$j = 1,\dots,m$. Therefore, all arcs
    $e_j$ are active in the beginning and also during the first thin flow phase because by Lemma~\ref{lemma:thin_flow_decomp} we have $x'_{e_j} > 0$ for the first thin flow with resetting which implies that $e_j$ stays active.
    
    Suppose now for contradiction that there are particles for which not all new arcs are active. Let $\phi_0$ be the
    infimum of these particles. By the consideration above we have $\phi_0>0$ and Lemmas~\ref{lemma:bound_for_l'} and \ref{lemma:thin_flow_decomp}
    imply
    \[f^+_{e_j}(\l_{\sink_j}(\phi))=\frac{x'_{e_j}}{\l'_{\sink_j}}
     \geq x'_{e_j} \cdot \sigma = d_j \cdot \sigma \stackrel{\eqref{equ:definition_of_nu}}{>} \nu_{e_j}\]
     for almost all $\phi \in [0, \phi_0)$ and all $j=1,\dots,m$. Hence,~\eqref{eqn:z'_definition} yields
     $z'_{e_j}(\l_{\sink_j}(\phi) + \tau_{e_j}) = f^+_{e_j}(\l_{\sink_j}(\phi)) - \nu_{e_j} > 0$ and, together with the
     fact that $\l'_{t_j} > 0$ (due to the positive throughput of $x'$ at $t_j$), we obtain
    \[\frac{\diff}{\diff \phi}  z_{e_j}(\l_{\sink_j}(\phi) + \tau_{e_j}) = z'_{e_j}(\l_{\sink_j}(\phi) + \tau_{e_j}) \cdot \l'_{\sink_j} > 0.\]
    In other words, a queue is building up within $[0,\phi_0)$, and therefore $z_{e_j}(\l_{\sink_j}(\phi_0) + \tau_{e_j})
    > 0$ for all $j = 1,\dots, m$. But the continuity of $z_{e_j}$ implies that there will be positive queues for all
    $\phi \in [\phi_0, \phi_0 + \varepsilon]$ for sufficiently small $\varepsilon > 0$. By
    Lemma~\ref{lemma:queue_implies_active} this implies that all new arcs are active during this interval contradicting that
    $\phi_0$ is an infimum.
\end{proof}

\subsection{Proof of Theorem~\ref{thm:multi_sink}}\label{ap:thm_multi_sink}
\begin{proof}
  It remains to show that the thin flow decompositions of the particles in $\Flow$ correspond to a sub-flow over time
  decomposition of the Nash flow over time. Throughout this proof we denote $\delta^-(v)$ and $\delta^+(v)$ for the in-
  and out-going arcs of $v$ within the original network $G$. Let~$I\coloneqq [a, b)$ be an interval such that
  the thin flow with resetting is constant $(x', \l')$ for all particles in $I$. For every node $v$ we denote by $I_v
  \coloneqq [\l_v(a),\l_v(b))$ the interval of local times of particles in $I$. By Lemma~\ref{lemma:new_arc_are_active}
  all new arcs $e_1, \dots, e_{\m}$ are active. Let $x'^1, \dots, x'^m$ be the thin flow decomposition given by
  Lemma~\ref{lemma:thin_flow_decomp}. The corresponding decomposition for the Nash flow over time with demands is constructed
  by setting
    \begin{alignat*}{2}
    g^j_e(\theta) &\coloneqq \frac{x'^j_e}{\l'_u} \qquad &&\text{for } \theta \in I_u\\ 
    g^j_i(\phi) &\coloneqq \sum_{e\in \delta^+(\source_i)} x'^j_e - \sum_{e \in \delta^-(\source_i)} x'^j_e \qquad
    &&\text{for } \varphi \in I
    \end{alignat*}
  for all $j = 1, \dots, \m$, every $e = uv \in E$, and all $i = 1, \dots, \n$. Note that if $\l'_u = 0$ we have
  $\l_u(a) = \l_u(b)$, and therefore $I_u$ is empty. By setting $g_e^j(\theta) \coloneqq 0$ for all $\theta < \l_u(0)$ we
  obtain well-defined functions $g_e^j$.
  
  First, we show that $g^j$ satisfies the sub-flow over time properties and conserves flow at all nodes
  except~$\Sources \cup \Set{\sink_j}$ for all $\theta \in I_u$.
  
  Given an arc $e = uv$ we obviously have for all $\theta \in I_u$ that
  \[g^{j}_e(\theta) = x'^j_e/\l'_u \leq x'_e/\l'_u = f_e(\theta).\] 
  If $x'_e > 0$ we have $f_e(\theta) = x'_e/\l'_u > 0$ for almost all $\theta \in
  I_u$ and by the definition of $g^{j-}$ we get for almost all $\xi \in I_v = \T_e (I_u)$ and $\theta
  \in I_u$, the unique value with $\xi = \T_e(\l_u(\phi))$, that
  \[g^{j-}_e(\xi) = f^-_e(\xi) \cdot \frac{g^{j}_e(\theta)}{f_e(\theta)} = \frac{x'_e}{\l'_v} \cdot
  \frac{x'^j_e}{\l'_u} \cdot \frac{\l'_u}{x'_e} = \frac{x'^j_e}{\l'_v}.\]
  But this equality also holds if $x'_e = 0$ because in this case it holds that $f_e(\theta) = 0$ for almost all $\theta
  \in I_u$, and therefore we have by definition that~$g^{j-}_e(\xi) = 0$. The following equation shows that $g^j$
  conserves flow at all nodes $v \in V \backslash \Sources \cup \Set{\sink_j}$ for almost all~$\theta \in
  I_v$
  \[ \sum_{e\in\delta^-(v)} g^{j-}_e(\theta) - \sum_{e\in\delta^+(v)} g^{j}_e(\theta) = \sum_{e\in\delta^-(v)} 
  \frac{x'^j_e}{\l'_v} - \sum_{e\in\delta^+(v)} \frac{x'^j_e}{\l'_v} = \frac{1}{\l'_v} \cdot \left(
  \sum_{e\in\delta^-(v)}
  x'^j_e - \sum_{e\in\delta^+(v)} x'^j_e \right) = 0,\]
  where the last equality holds because of the flow conservation of $x'^j$ at~$v$. To show
  $\eqref{equ:subflow_conservation}$ it remains to prove it for $\sink_j$, which is true because for all $\theta \in
  I_{\sink_j}$ we have
  \[ \sum_{e\in\delta^-(\sink_j)} g^{j-}_e(\theta) - \sum_{e\in\delta^+(\sink_j)} g^{j}_e(\theta) 
  = \frac{x_{e_j}}{\l'_{\sink_j}} = \exf_{e_j}(\theta) 
  = \sum_{e\in\delta^-(\sink_j)} f^-_e(\theta) - \sum_{e\in\delta^+(\sink_j)} f_e(\theta).\]

  Next, we show that $(g_i^j)_{i = 1}^n$ is a matching sub-inflow distribution for all $j = 1, \dots, m$ with values
  $d_j$ for all~$\phi \in I$. In the case of $T'(\phi) \stackrel{\eqref{eqn:Nash_condition_source}}{=}\l'_{\source_i} >
  0$ it holds that
  \[\left(\sum_{e\in\delta^+(\source_i)}\!\! g^{j}_e(\l_{\source_i}(\phi)) - 
  \sum_{e\in\delta^-(\source_i)}\!\! g^{j-}_e(\l_{\source_i}(\phi))\right)
  \T'_i(\phi) = \left(\sum_{e\in\delta^+(\source_i)}\! \frac{x'^j_e}{\l'_{\source_i}} -
  \sum_{e\in\delta^-(\source_i)}\! \frac{x'^j_e}{\l'_{\source_i}}\right)
  \l'_{\source_i} = g_i^j(\phi).\]
  In the case of $\l'_{\source_i} = 0$ this is also true since both sides are equal to~$0$. By
  Lemma~\ref{lemma:thin_flow_decomp} we obtain for all~$\phi \in I$ that
    \[\sum_{i = 1}^\n g^j_i(\phi) = \sum_{i = 1}^\n \left(\sum_{e \in \delta^+(\source_i)} x'^j_e - 
    \sum_{e \in \delta^-(\source_i)} x'^j_e\right) = \abs{x'^j} = d_j.\]
  Finally, we show that the family $(g^j)_{j=1}^\m$ together with the matching sub-inflow distributions fulfills the
  sub-flow over time decomposition conditions for all $\theta \in I_u$. Clearly, $\sum_{j=1}^m d_j = 1$ and for all $e =
  uv \in E$ we have
  \[\sum_{j = 1}^\m g^j_e(\theta) = \sum_{j = 1}^\m \frac{x'^j_e}{\l'_u} = \frac{x'_e}{\l'_u} = f_e(\theta).\]
  Note that all these previous conditions hold for all $\phi \in \Flow$ and all $\theta \in [0, \infty)$ because either
  $\theta < \l_v(0)$, where all in and out flow at $v$ is $0$, or $\theta$ is element of the local times $I_u$ of some
  particle interval $I$. Hence, $(g^j)_{j=1^m}$ is a sub-flow over time decomposition of $f$ with values $d_1, \dots,
  d_{\m}$, where $g^j$ is an $\Sources$-$t_j$-sub-flow over time. Since $\exf$ is a Nash flow over time $f$ satisfies the
  Nash flow conditions \eqref{eqn:Nash_condition_source} and \eqref{eqn:Nash_condition_active} as well, and therefore $f$
  is a Nash flow over time with demands $d_1, \dots, d_{\m}$.
\end{proof}    
  
\end{document}